\newtheorem{thm}{Theorem}
\newtheorem{lem}[thm]{Lemma}
\newtheorem{prop}[thm]{Proposition}
\newtheorem{define}[thm]{Definition}
\newtheorem{fact}[thm]{Fact}
\newtheorem{example}[thm]{Example}
\newcommand{\vx} {{\bf x}}
\newcommand{\bE} { {E}}
\newcommand{\gl} { {\rm {gl}}}
\newcommand{\lrD} { {\langle D_t \rangle}}
\newcommand{\ldr} { {\langle D_t, D_{1}, \ldots, D_{n}\rangle}}
\newcommand{\Gal} { \text{Gal}}
\newcommand{\ann} { \operatorname{ann}}
\newcommand{\bK} { {K}}
\newcommand{\bbR} { {\mathbf{R}}}
\newcommand{\pa}{\partial}
\begin{document}

\title{Parallel Telescoping and Parameterized Picard--Vessiot Theory
\footnote{S.C.\, R.F.\ and Z.L. were supported
by two NSFC grants (91118001, 60821002/F02) and a 973 project (2011CB302401),
M.F.S. was supported by the NSF grant CCF-1017217. }
}

\author{Shaoshi Chen, Ruyong Feng, and Ziming Li\\
KLMM, AMSS, Chinese Academy of Sciences\\Beijing 100190, (China)\\ {\tt {schen,ryfeng}@amss.ac.cn},\\{\tt zmli@mmrc.iss.ac.cn} \\\\
Michael F.\ Singer\\
Department of Mathematics,\\ North Carolina State University \\Raleigh, NC 27695-8205 (USA)\\
{\tt singer@math.ncsu.edu}
}

\maketitle

\begin{abstract}
Parallel telescoping is a natural generalization of differential creative-telescoping
for single integrals to line integrals. It computes a linear ordinary differential
operator~$L$, called a parallel telescoper, for several multivariate functions, such that
the applications of~$L$ to the functions yield antiderivatives of a single function.
We present a necessary and sufficient condition guaranteeing  the existence of parallel telescopers
for differentially finite functions, and develop an algorithm to compute minimal ones for compatible
hyperexponential functions.
Besides computing annihilators of parametric line integrals, we use the parallel telescoping for
determining Galois groups of parameterized partial differential systems of first order.
\end{abstract}

\section{Introduction}\label{SECT:intro}
The problem of finding linear differential equations with polynomial
coefficients for parametric  integrals has a long history. It at least dates back to Picard~\cite{Picard1902} who proved
the existence of such linear differential equations for
integrals of algebraic functions involving parameters.
His result has been generalized
to higher-dimensional cases and led to
Gauss--Manin connections~\cite{Manin1958, Manin1963, KatzOda1968}.
The key for obtaining such linear differential equations is the method of \emph{creative telescoping},
which was first formulated as an algorithmic tool by Zeilberger and his collaborators
in 1990s~\cite{Zeilberger1990, Zeilberger1991, Wilf1992}. The method enables us to
prove a large amount of combinatorial identities in an automatic way~\cite{PWZbook1996}.
For more recent developments, see the survey article~\cite{Koutschan2013}.

Given a function~$f(t, x)$ described by two linear differential equations with polynomial coefficients in~$t$ and~$x$,
the method of differential creative-telescoping \cite{Almkvist1990} finds
a linear differential operator~$L$ in~$\pa/\pa t$ with polynomial coefficients in~$t$ such that
$L(f) = \pa g/\pa x,$
where~$g$ is usually a linear combination of partial derivatives of~$f$ over the field of rational functions in~$t$ and~$x$.
The operator~$L$ is called a \emph{telescoper} for~$f$, and
the function~$g$ is called a \emph{certificate} of~$L$.
They can be used to evaluate parametric integrals of~$f$ with respect to~$x$.

Recently, a connection has been revealed between the method of differential creative-telescoping and
Galois theory of parameterized differential equations in~\cite{CassidySinger2007, Arreche2012, Dreyfus2011, Singer2013, Gorchinskiy2013}.
Consider a first-order partial differential system of the form:
\begin{equation} \label{EQ:pps}
\frac{ \pa Y}{\pa x_1}  = f_1, \, \,\,   \ldots, \,\,  \, \frac{\pa Y}{\pa x_n}  = f_n,
\end{equation}
where~$f_1, \ldots, f_n$ are rational functions in~$t$, $x_1,\ldots, x_n$ satisfying compatibility conditions.
Its parameterized Galois group can be determined by
constructing a linear ordinary differential operator~$L$ in~$\pa/\pa t$ with polynomial coefficients in~$t$ such that
\[L(f_1) = \frac{\pa g}{\pa x_1} , \ldots, L(f_n)= \frac{\pa g}{\pa x_n} \]
for a single rational function~$g$. The operator~$L$ will be referred as to a parallel telescoper
for~$f_1$, \ldots,~$f_n$ with respect to~$x_1,$ \ldots,~$x_n$.
Parallel telescopers
may also be used to evaluate parametric line integrals
in the same manner as we do for single integrals by classical creative-telescoping.

In this paper, we present a necessary and sufficient condition guaranteeing  the existence of parallel telescopers
for differentially finite functions (see Definition~\ref{DEF:d-finite}).  The condition can easily be verified if the given functions are hyperexponential.
We develop an algorithm to compute a parallel telescoper of minimal order for compatible hyperexponential functions.
The algorithm can be used for constructing parallel telescopers for non-compatible ones, although its output may not be of minimal order.
We also show how to determine the Galois group of a differential system of the form~\eqref{EQ:pps} by parallel
telescoping.

The rest of the paper is organized as follows. In Section~\ref{SECT:finite}, we review
the notion of differentially finite elements.
In Section~\ref{SECT:paratele},
we study the existence of parallel telescopers. We present an algorithm in Section~\ref{SECT:hyper}
for constructing minimal parallel telescopers for hyperexponential functions.
In Section~\ref{SECT:ppv},
parallel telescoping is applied
to determine Galois groups of parameterized partial differential
systems of first order.

\section{Differentially finite elements}\label{SECT:finite}
Let~$k$ be an algebraically closed field of characteristic zero. Assume that~$\delta_i$
is the usual partial derivative with respect to~$x_i$ on the field~$k(x_1, \ldots, x_n)$ for all~$i$
with~$1 \le i \le n$.
For brevity, we set~$\vx=(x_1, \ldots, x_n)$.
Over the differential field~$\left(k(\vx),  \{\delta_1, \ldots, \delta_n\}\right)$
there is a noncommutative algebra~$k(\vx)\langle D_1, \ldots, D_n\rangle$ whose commutation rules are
\[ D_i D_j = D_j D_i \quad \text{and} \quad
D_i f = f D_i + \delta_i(f) \]
for all~$i,j \in \{1, \ldots, n\}$ and~$f \in k(\vx)$. The algebra is also called the ring of differential
operators associated to~$k(\vx)$. The commutation rules imply the following fact:
\begin{fact} \label{FACT:bracket}
Let~$L \in k(\vx)\langle D_1, \ldots, D_n\rangle$ and~$[D_i, L]=D_i L - L D_i$
for some~$i$ with~$1 \le i \le n$.
\begin{enumerate}
\item[(i)] $[D_i, L]=0$ if and only if~$L$ is free of~$x_i$.
\item[(ii)] If~$L$ is free of~$D_i$, then so is~$[D_i, L]$.
\item[(iii)] If~$L$ is in~$k[\vx]\langle D_1, \ldots, D_n\rangle$, then the degree of~$[D_i,L]$
in~$x_i$ is less than that of~$L$.
\end{enumerate}
\end{fact}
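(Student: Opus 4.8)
The plan is to prove each of the three statements directly from the commutation rules $D_i f = f D_i + \delta_i(f)$ and $D_i D_j = D_j D_i$, exploiting the linearity of the bracket $[D_i, \cdot]$ and the Leibniz-type rule it inherits. I would begin by writing a general operator as $L = \sum_{\alpha} c_\alpha \mathbf{D}^\alpha$, where $\alpha$ ranges over multi-indices, $\mathbf{D}^\alpha = D_1^{\alpha_1}\cdots D_n^{\alpha_n}$, and $c_\alpha \in k(\vx)$. The central computation is to evaluate $[D_i, L]$ on such an $L$. Since $D_i$ commutes with every $D_j$, the bracket acts only through the coefficients: one checks that $[D_i, c_\alpha \mathbf{D}^\alpha] = \delta_i(c_\alpha)\mathbf{D}^\alpha$, and summing over $\alpha$ gives the clean formula
\[
[D_i, L] = \sum_{\alpha} \delta_i(c_\alpha)\,\mathbf{D}^\alpha.
\]
Establishing this identity is really the crux of everything that follows, so I would verify it carefully, first for a single monomial $c\,\mathbf{D}^\alpha$ by pushing $D_i$ past the coefficient $c$ using $D_i c = c D_i + \delta_i(c)$ and then past the commuting powers $\mathbf{D}^\alpha$, and then extending by linearity.

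With this formula in hand, all three parts become essentially immediate. For part (i), $[D_i, L] = 0$ holds if and only if $\delta_i(c_\alpha) = 0$ for every $\alpha$ (since the monomials $\mathbf{D}^\alpha$ are linearly independent over $k(\vx)$); and because $\delta_i$ is the partial derivative $\pa/\pa x_i$ on $k(\vx)$, a rational function has vanishing $\delta_i$-derivative precisely when it is free of $x_i$. Thus $[D_i, L] = 0$ if and only if each $c_\alpha$ is free of $x_i$, which is exactly the assertion that $L$ is free of $x_i$. For part (ii), if $L$ is free of $D_i$ then the multi-indices $\alpha$ appearing in $L$ all have $\alpha_i = 0$; since the bracket formula does not alter the $\mathbf{D}^\alpha$ factors but only differentiates the coefficients, every monomial in $[D_i, L]$ still has $\alpha_i = 0$, so $[D_i, L]$ is again free of $D_i$.

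For part (iii), I would specialize to $L \in k[\vx]\langle D_1, \ldots, D_n\rangle$, so each $c_\alpha$ is a polynomial. The bracket formula shows the coefficients of $[D_i, L]$ are the polynomials $\delta_i(c_\alpha) = \pa c_\alpha/\pa x_i$. The point is that differentiating a polynomial with respect to $x_i$ strictly lowers its degree in $x_i$ (and annihilates those terms of $x_i$-degree zero); taking the maximum over $\alpha$ then shows $\deg_{x_i}[D_i, L] < \deg_{x_i} L$, with the convention that the degree of the zero polynomial is $-\infty$ so the inequality holds even when $[D_i, L] = 0$.

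The main obstacle, such as it is, lies entirely in the monomial computation $[D_i, c\,\mathbf{D}^\alpha] = \delta_i(c)\,\mathbf{D}^\alpha$; once that is secured, the three conclusions follow by elementary observations about the $\delta_i$-kernel and about degree drop under differentiation. One subtlety worth flagging is the need to invoke linear independence of the $\mathbf{D}^\alpha$ over $k(\vx)$ when passing from "the bracket vanishes" to "each differentiated coefficient vanishes" in part (i); this is a standard fact about the ring of differential operators, coming from its structure as a free left $k(\vx)$-module with basis the monomials $\mathbf{D}^\alpha$.
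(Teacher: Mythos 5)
Your proof is correct: the identity $[D_i, c\,\mathbf{D}^\alpha]=\delta_i(c)\,\mathbf{D}^\alpha$ is exactly the right reduction, and all three parts follow from it as you describe. The paper itself offers no proof, asserting the fact as an immediate consequence of the commutation rules, and your computation is precisely the intended verification.
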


Due to the noncommutativity of~$k(\vx)\langle D_1, \ldots, D_n\rangle$, we make a convention
that ideals, vector spaces, modules and submodules are all left ones in this paper.

Let~$M$ be a module over~$k(\vx)\langle D_1, \ldots, D_n\rangle$.
For an operator $L$ in the ring~$k(\vx)\langle D_1, \ldots, D_n\rangle$ and~$h \in M$,
the scalar product of~$L$ and~$h$ is denoted by~$L(h)$. We say that~$L$ is an annihilator
of~$h$ if~$L(h)=0$. The set of all annihilators of~$h$ is denoted by~$\ann(h)$,
which is an ideal in~$k(\vx)\langle D_1, \ldots, D_n \rangle$.

\begin{define}\label{DEF:d-finite}
Let~$h$ be an element of a module over the ring~$k(\vx)\langle D_1, \ldots, D_n\rangle$.
We say that~$h$ is \emph{differentially finite} (abbreviated as $D$-finite) over~$k(\vx)$
if
\[\text{$\ann(h) {\cap} k(\vx)\langle D_i \rangle {\neq} \{0\}$
for all~$i$ with~$1 \le i \le n$}.\]
\end{define}
It is straightforward to see that~$h$ is $D$-finite if and only if the
submodule generated by~$h$ is a finite-dimensional linear space over~$k(\vx)$. It follows that,
if~$h_1, \ldots, h_m$ are $D$-finite elements in a module over~$k(\vx)\langle D_1, \ldots, D_n\rangle$,
so is every element in the submodule generated by~$h_1, \ldots, h_m$.

When a  module consists of functions in~$x_1, \ldots, x_n$,
its $D$-finite elements are called $D$-finite functions, which are ubiquitous in combinatorics as generating functions.
$D$-finite functions were
first systematical investigated by Stanley in~\cite{Stanley1980}. Their important algebraic
properties have been  revealed by Lipshitz in~\cite{Lipshitz1988, Lipshitz1989}.
We recall a lemma in~\cite[Lemma 3]{Lipshitz1988}, which is the starting point of our study on parallel telescopers.

%
%
%
%
%
%

\begin{lem}[Lipshitz, 1988]\label{LM:lipshitz}
If~$h$ is a $D$-finite element in a module over the ring $k(\vx)\langle D_1, \ldots, D_n \rangle$,
then
\[ \ann(h) \cap  k(x_1, \ldots, x_{i-1}, x_{i+1}, \ldots, x_n)\langle D_i, D_j\rangle \neq \{0\} \]
for all~$i, j\in \{1, \ldots, n\}$ with~$i\neq j$.
\end{lem}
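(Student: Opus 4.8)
The plan is to translate the problem into a counting argument inside a finite-dimensional vector space, and the whole point will be to arrange the count so that the coefficients of the resulting annihilator come out free of~$x_i$. We may assume $h\neq 0$, since otherwise $D_i$ itself lies in the intersection. By the remark following Definition~\ref{DEF:d-finite}, the submodule $V := k(\vx)\langle D_1, \ldots, D_n\rangle\, h$ is a finite-dimensional $k(\vx)$-vector space, say of dimension~$r$, and it is closed under $D_i$ and $D_j$. Writing $F := k(x_1, \ldots, x_{i-1}, x_{i+1}, \ldots, x_n)$ so that $k(\vx) = F(x_i)$, it then suffices to produce finitely many scalars $c_{a,b} \in F$, not all zero, with $\sum_{a,b} c_{a,b}\, D_i^a D_j^b h = 0$; the operator $R = \sum_{a,b} c_{a,b} D_i^a D_j^b$ will be the required nonzero element of $\ann(h) \cap F\langle D_i, D_j\rangle$.

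Next I would fix an $F(x_i)$-basis $b_1 = h, b_2, \ldots, b_r$ of~$V$ and represent each element of~$V$ by its coordinate vector in $F(x_i)^r$. Because $\delta_i$ kills~$F$ while $\delta_j$ maps~$F$ into~$F$ and annihilates~$x_i$, the operators $D_i$ and $D_j$ act on coordinate vectors by $\vec v \mapsto \delta_i(\vec v) + M^{(i)}\vec v$ and $\vec v \mapsto \delta_j(\vec v) + M^{(j)}\vec v$, where $M^{(i)}, M^{(j)}$ have entries in $F(x_i)$. Let $\Delta \in F[x_i]$ be a common denominator of these entries. The key structural observation is that the shape $\vec p / \Delta^m$, with $\vec p$ a vector of polynomials in~$x_i$ over~$F$, is preserved by both operators: the quotient rule together with the factor $1/\Delta$ coming from $M^{(i)}, M^{(j)}$ raises the exponent~$m$ by at most one, while $\delta_j$ merely differentiates the $F$-coefficients and hence introduces no new pole in~$x_i$. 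A short induction then shows that $D_i^a D_j^b h$ has the form $\vec p_{a,b}/\Delta^{a+b}$ with $\deg_{x_i} \vec p_{a,b} \le C(a+b)$ for a constant~$C$ independent of~$a,b$.

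Finally I would run the count. Fixing~$S$ and clearing denominators to~$\Delta^S$, every $D_i^a D_j^b h$ with $a + b \le S$ becomes $\vec q_{a,b}/\Delta^S$ with $\deg_{x_i}\vec q_{a,b} = O(S)$, so all of these vectors lie in a single $F$-subspace of $F(x_i)^r$ of dimension $O(S)$. Since the number of pairs $(a,b)$ with $a+b \le S$ equals $\binom{S+2}{2} = \Theta(S^2)$, for~$S$ large enough the monomials $D_i^a D_j^b h$ must satisfy a nontrivial $F$-linear relation, which is exactly what is needed. The main obstacle is precisely this last passage from dependence over $k(\vx)$ (which is automatic from $\dim V = r$ but yields coefficients involving~$x_i$) to dependence over~$F$; it is overcome by the degree bound of the previous paragraph, since there the quadratic growth in the number of available monomials outpaces the merely linear growth of the ambient $F$-dimension.
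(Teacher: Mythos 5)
Your argument is correct: the paper does not prove this lemma but cites it as Lemma~3 of Lipshitz (1988), and your dimension-counting proof --- bounding the $x_i$-degree and the $\Delta$-denominator exponent of the coordinate vectors of $D_i^aD_j^bh$ so that the $\Theta(S^2)$ monomials with $a+b\le S$ land in an $F$-subspace of dimension $O(S)$ --- is essentially Lipshitz's original argument. No gaps; this is the standard proof of the cited result.
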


%

The next lemma allows one to remove redundant variables.
\begin{lem} \label{LM:eliminate}
Let~$h$ be a $D$-finite element in a module over~$k(\vx)\langle D_1, \ldots, D_n \rangle$.
If
$$D_{m+1}(h)= D_{m+2}(h)= \cdots = D_n(h) =0$$
 for some~$m \in \{1, \ldots, n-1\}$, then~$h$ is also a $D$-finite
element
over~$k(x_1, \ldots, x_m)$.
\end{lem}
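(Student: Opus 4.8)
The plan is to verify the definition of $D$-finiteness over the smaller field directly: for each fixed $i$ with $1 \le i \le m$, I will produce a nonzero operator in $\ann(h)$ whose coefficients lie in $k[x_1,\ldots,x_m]$ and which involves only $D_i$. This shows $\ann(h)\cap k(x_1,\ldots,x_m)\langle D_i\rangle \neq \{0\}$, which is exactly the condition for $h$ to be $D$-finite over $k(x_1,\ldots,x_m)$ once we restrict scalars from $k(\vx)\langle D_1,\ldots,D_n\rangle$ to its subring $k(x_1,\ldots,x_m)\langle D_1,\ldots,D_m\rangle$ (note $\delta_1,\ldots,\delta_m$ preserve $k(x_1,\ldots,x_m)$, so this is a genuine subring and $h$ still lives in the restricted module). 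Since $h$ is $D$-finite over $k(\vx)$, I may start from a nonzero $P\in\ann(h)\cap k(\vx)\langle D_i\rangle$, and after left-multiplying by a common denominator of its coefficients I may assume $P\in k[\vx]\langle D_i\rangle$. The goal is then to eliminate the variables $x_{m+1},\ldots,x_n$ from $P$ one at a time while keeping it a nonzero annihilator.

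The key observation is the following: if $P(h)=0$ and $D_j(h)=0$, then $[D_j,P](h)=D_j(P(h))-P(D_j(h))=0$, so $[D_j,P]\in\ann(h)$ as well. This is precisely where the hypothesis $D_j(h)=0$ for $j>m$ is used. Moreover, since $P$ is free of $D_j$, so is $[D_j,P]$ by Fact~\ref{FACT:bracket}(ii); in fact the commutation rules give $[D_j,P]=\sum_l \delta_j(a_l)D_i^l$ when $P=\sum_l a_l D_i^l$, so $[D_j,P]$ is obtained from $P$ by applying $\delta_j$ to each coefficient and therefore again lies in $k[\vx]\langle D_i\rangle$. Thus, for every $j\in\{m+1,\ldots,n\}$, the operation $P\mapsto[D_j,P]$ maps $\ann(h)\cap k[\vx]\langle D_i\rangle$ into itself and, by Fact~\ref{FACT:bracket}(iii), strictly lowers the degree in $x_j$.

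To eliminate a single variable $x_j$ with $j>m$, I apply $[D_j,\,\cdot\,]$ exactly $d$ times, where $d=\deg_{x_j}P$. Viewing $P$ as a polynomial in $x_j$ with leading coefficient $B\in k[x_1,\ldots,x_{j-1},x_{j+1},\ldots,x_n]\langle D_i\rangle$, the $d$-fold iterated commutator equals $d!\,B$, all lower-degree terms having been killed. Since $\operatorname{char} k=0$ and $B\neq 0$, this is a nonzero operator free of $x_j$, still lying in $\ann(h)\cap k[\vx]\langle D_i\rangle$. Performing this successively for $j=n,n-1,\ldots,m+1$ (each step legitimate because $D_j(h)=0$ for all such $j$) yields a nonzero operator $Q\in\ann(h)\cap k[x_1,\ldots,x_m]\langle D_i\rangle$. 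As $i\in\{1,\ldots,m\}$ was arbitrary, $h$ is $D$-finite over $k(x_1,\ldots,x_m)$.

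The computation itself is short, and I do not expect a serious obstacle beyond the bookkeeping. The one point deserving care is that nonvanishing must be preserved under repeated differentiation of the coefficients; this is exactly where the characteristic-zero hypothesis and the leading-coefficient argument are essential, and it is the step I would write out most carefully. It is worth noting that this route relies only on Fact~\ref{FACT:bracket} and the definition of $D$-finiteness, and in particular does not require Lipshitz's Lemma~\ref{LM:lipshitz}.
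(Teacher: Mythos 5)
Your proof is correct and rests on exactly the same mechanism as the paper's: since $D_j(h)=0$ for $j>m$, the commutator $[D_j,\cdot]$ sends nonzero annihilators of $h$ in $k[\vx]\langle D_i\rangle$ to annihilators of strictly smaller degree in $x_j$ (Fact~\ref{FACT:bracket}), which lets one strip out $x_{m+1},\ldots,x_n$ one variable at a time. The only difference is presentational --- you iterate the bracket explicitly to land on $d!$ times the leading coefficient, whereas the paper packages the same degree reduction as a minimal-degree contradiction argument --- so the two proofs are essentially identical.
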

\begin{proof}
By the definition of $D$-finite elements, it suffices to show that
the intersection of~$\ann(h)$ and~$k(x_1, \ldots, x_m)\langle D_i \rangle$ is nontrivial
for all~$i$ with~$1 \le i \le m$.
Suppose the contrary. Then, without loss of generality, we may further suppose
that every nonzero annihilator of~$h$ in~$k[\vx]\langle D_1 \rangle$ involves~$x_n$.
Among those annihilators, we choose one, say~$P$, whose degree in~$x_n$ is minimal.
By Fact~\ref{FACT:bracket}~(i),~$[D_n, P]$ is nonzero. By Fact~\ref{FACT:bracket}~(ii), it belongs
to~$k(x_1, \ldots, x_n)\langle D_1 \rangle$.
Since both~$D_n(h)$ and~$P(h)$ are equal to zero,~$[D_n,P]$ is also a nonzero annihilator of~$h$
in~$k[\vx]\langle D_1 \rangle$. By Fact~\ref{FACT:bracket}~(iii), it has degree in~$x_n$ less than that of~$P$, a contradiction.
\end{proof}

%
%

\section{Parallel Telescopers}\label{SECT:paratele}
In this section, we define the notion of parallel telescopers for several multivariate functions
in a module-theoretic setting, and study under what conditions parallel telescopers exist
for $D$-finite elements.

\subsection{Definition of parallel telescopers} \label{SUBSECT:defn}
In order to define parallel telescopers, we introduce a new indeterminate~$t$,
and extend the field~$k(\vx)$ to~$k(t, \vx)$, which is denoted by~$\bK$,
and set $\Delta= \{\delta_t, \delta_1, \ldots, \delta_n\}$, where~$\delta_t$ stands for the usual partial derivative
with respect to~$t$ on~$\bK$. Moreover,
let us denote by~$\bbR$ the ring~$\bK\ldr$  of linear differential operators.
The notions such as $D$-finite elements and annihilators carry over naturally to~$\bK$ and~$\bbR$.
\begin{define} \label{DEF:paratele}
Let $f_1, \ldots, f_n$ be in an $\bbR$-module.
A nonzero operator~$L\in k(t)\lrD$ is called a \emph{parallel telescoper} for~$f_1,$ \ldots, $f_n$
with respect to~$\vx$ if~there exists an element~$g$ in the submodule generated by~$f_1$, \ldots,~$f_n$ over~$\bbR$,
such that
$$L(t, D_t)(f_i) = D_{i}(g) \quad \mbox{for all~$1\leq i \leq n$}.$$
{The element~$g$ is called a \emph{certificate} of~$L$ with respect to~$\vx$.}
\end{define}
By Definition~\ref{DEF:paratele},
the  parallel telescopers for~$f_1$, \ldots,~$f_n$ and zero form an ideal in the ring~$k(t)\lrD$.
The ideal is principal, since~$k(t)\lrD$ is a left Euclidean domain.
A generator of the ideal is called  a \emph{minimal parallel telescoper} for~$f_1, \ldots, f_n$ with respect to~$\vx$.

\subsection{Existence of parallel telescopers}\label{SUBSECT:existence}
We derive a necessary and sufficient condition on the existence of parallel telescopers for $D$-finite elements.
To this end, we need a differential analogue of~\cite[Thm. 6.2.1]{PWZbook1996}.

\begin{lem} \label{LM:tele}
Let~$h$ be an element of an $\bbR$-module. If~$h$ is D-finite over~$\bK$, then, for every~$i\in \{1, \ldots, n\}$, there exists a nonzero
operator
\[L_i \in k(t, x_1, \ldots, x_{i-1}, x_{i+1}, \ldots, x_n)\langle D_t\rangle\]
 and an element~$g_i$ in the submodule
generated by~$h$ such that~$L_i(h) = D_i(g_i).$
\end{lem}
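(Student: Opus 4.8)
The plan is to reduce the existence of a one-variable telescoper to a two-operator annihilator supplied by Lemma~\ref{LM:lipshitz}, exploiting that $D_i$ is central in the relevant subalgebra. Fix~$i$ and apply Lemma~\ref{LM:lipshitz} to~$h$ with the pair of derivations~$D_i, D_t$, eliminating the variable~$x_i$; since~$h$ is $D$-finite over~$\bK$, this yields a nonzero operator
\[ P \in \ann(h) \cap k(t, x_1, \ldots, x_{i-1}, x_{i+1}, \ldots, x_n)\langle D_i, D_t\rangle =: A. \]
Set $B := k(t, x_1, \ldots, x_{i-1}, x_{i+1}, \ldots, x_n)\lrD$. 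The key structural observation is that every coefficient occurring in an element of~$A$ is free of~$x_i$, so by Fact~\ref{FACT:bracket}~(i) the operator~$D_i$ commutes with all of them, and it commutes with~$D_t$; hence~$D_i$ is \emph{central} in~$A$, and $A = B[D_i]$ is an ordinary polynomial ring in the central variable~$D_i$ over the domain~$B$. Among all nonzero members of the left ideal $\ann(h) \cap A$, I would choose~$P$ of minimal degree~$e$ in~$D_i$.

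Next I expand $P = P_0 + D_i Q$ with $P_0 \in B$ and $Q \in A$, where centrality of~$D_i$ is what permits pulling it to the left. Applying~$P$ to~$h$ and using $P(h)=0$ gives $P_0(h) = -D_i(Q(h))$. If $P_0 \neq 0$, the lemma follows: the operator $L_i := P_0$ lies in $k(t, x_1, \ldots, x_{i-1}, x_{i+1}, \ldots, x_n)\lrD$, and $g_i := -Q(h)$ lies in the submodule generated by~$h$, so that $L_i(h) = D_i(g_i)$. The case $e=0$ is subsumed here with $Q = 0$ and $g_i = 0$.

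The main obstacle is the degenerate case $P_0 = 0$ (which forces $e \geq 1$), where the above yields only $D_i(Q(h)) = 0$. Here I would set $w := Q(h)$, which lies in the submodule generated by~$h$ and is therefore itself $D$-finite over~$\bK$, and which satisfies $D_i(w) = 0$. This is exactly the situation of Lemma~\ref{LM:eliminate}, with~$x_i$ as the redundant variable, so~$w$ is $D$-finite over $k(t, x_1, \ldots, x_{i-1}, x_{i+1}, \ldots, x_n)$; in particular its annihilator meets~$B$ in a nonzero operator~$R$. Then $RQ(h) = R(w) = 0$, so $RQ$ is a \emph{nonzero} element of $\ann(h) \cap A$ (nonzero because~$A$ is a domain and $R, Q \neq 0$), and, since~$D_i$ is central and the $D_i$-degree is additive, $\deg_{D_i}(RQ) = \deg_{D_i}(Q) = e-1 < e$. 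This contradicts the minimality of~$e$, so in fact $P_0 \neq 0$ and the construction of the preceding paragraph applies. I expect the only delicate point to be this reduction to Lemma~\ref{LM:eliminate}: confirming that the single relation $D_i(w)=0$ genuinely suffices to remove~$x_i$ from the ground field while retaining a nontrivial $D_t$-annihilator for~$w$.
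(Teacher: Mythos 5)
Your proposal is correct and follows essentially the same route as the paper's own proof: invoke Lemma~\ref{LM:lipshitz} to get an $x_i$-free annihilator in $\langle D_t, D_i\rangle$, take one of minimal degree in $D_i$, split off the $D_i$-multiple as $P = P_0 + D_iQ$, and rule out $P_0 = 0$ by combining Lemma~\ref{LM:eliminate} with a degree-drop contradiction via $RQ$. The only additions are cosmetic (the explicit centrality of $D_i$ justifying the decomposition, and a harmless sign on the certificate), so nothing further is needed.
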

\begin{proof}
Let~$N$ be the submodule generated by~$h$ over~$\bbR$.

By Lemma~\ref{LM:lipshitz}, $h$ has a nonzero annihilator in~$\bK\langle D_t, D_i\rangle$, which
is free of~$x_i$.
Among all of the $x_i$-free and nonzero annihilators for~$h$, we choose one, say~$P_i$, whose degree in~$D_{i}$ is minimal.
If the degree~$d_i$ of~$P_i$ in~$D_{i}$ is equal to zero, then the lemma holds by taking~$L_i = P_i$ and~$g_i=0$.
Assume that~$d_i > 0$. We can always write
\begin{equation} \label{EQ:ldiv}
P_i = L_i + D_{i} Q_i,
\end{equation}
where~$L_i$ is in the ring~$k(t, x_1, \ldots, x_{i-1}, x_{i+1}, \ldots, x_n)\langle D_t\rangle$,
and~$Q_i$ is in the ring $k(t, x_1, \ldots, x_{i-1}, x_{i+1}, \ldots, x_n)\langle D_t, D_i\rangle$
whose degree in~$D_i$ is less than~$d_i$.

Set~$g_i:=Q_i(h)$. Since~$P_i(h)=0$ and $g_i$ is in~$N$,  it remains to show that~$L_i$ is nonzero
in~\eqref{EQ:ldiv}.
Suppose that~$L_i = 0$. Then~$D_i(g_i)=0$, which, together with the $D$-finiteness of~$g_i$, implies that $g_i$ is $D$-finite over~$k(t, x_1, \ldots, x_{i-1}, x_{i+1}, \ldots, x_n)$
by Lemma~\ref{LM:eliminate}. Thus,
there exists a nonzero linear operator~$R_i$ in the ring $k[t, x_1, \ldots, x_{i-1}, x_{i+1}, \ldots, x_n]\langle D_t\rangle$
such that~$R_i(g_i)=0$.
It follows that the product~$R_i Q_i$ is also a nonzero and $x_i$-free annihilator of~$h$. But it has degree in~$D_i$ less
than~$d_i$, which contradicts the minimality assumption for the degree of~$P_i$ in~$D_i$.
\end{proof}
To study the existence of parallel telescopers, we introduce the notion of compatible elements with respect to~$\vx$.
\begin{define} \label{DEF:compatible}
The elements~$f_1, \ldots, f_n$ of an $\bbR$-module are said to be \emph{compatible} with respect to~$\vx$
if the compatibility conditions
$D_{i}(f_j) = D_{j}(f_i)$ for all~$1\leq i < j \leq n$ hold.
\end{define}

The following lemma shows that the compatibility conditions are sufficient for
the existence of parallel telescopers for $D$-finite elements.

\begin{lem}\label{LM:suffcond}
Let $f_1, \ldots, f_n$ be elements of an $\bbR$-module. If they are $D$-finite over~$\bK$
and compatible with respect to~$\vx$,
then there exists a parallel telescoper for~$f_1,$ \ldots,~$f_n$ with respect to~$\vx$.
\end{lem}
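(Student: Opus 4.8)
The plan is to recast the statement cohomologically and then establish a finiteness property by induction on~$n$. Let~$N$ be the submodule generated by~$f_1,\ldots,f_n$ over~$\bbR$; since the~$f_i$ are $D$-finite,~$N$ is a finite-dimensional vector space over~$\bK$. Inside~$N^n$ I would consider the space of \emph{closed} tuples
\[
C=\{(u_1,\ldots,u_n)\in N^n : D_i(u_j)=D_j(u_i)\ \text{for all } i,j\}
\]
together with the subspace of \emph{exact} tuples $B=\{(D_1(g),\ldots,D_n(g)) : g\in N\}$. Because the~$D_i$ commute pairwise,~$B\subseteq C$, and the compatibility hypothesis says precisely that $\bar f:=(f_1,\ldots,f_n)\in C$. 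Since every element of~$k(t)$ and the operator~$D_t$ commute with each~$D_i$, the ring~$k(t)\lrD$ acts on~$C/B$. A parallel telescoper for~$f_1,\ldots,f_n$ is then exactly a nonzero~$L\in k(t)\lrD$ with $L\cdot[\bar f]=0$ in~$C/B$: a relation $\sum_j c_j(t)D_t^j\,\bar f\in B$ means there is a single~$g\in N$ with $\sum_j c_j(t)D_t^j(f_i)=D_i(g)$ for all~$i$, which is the defining relation of Definition~\ref{DEF:paratele} with common certificate~$g$.

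Consequently it suffices to prove that the~$k(t)$-span of $\{[D_t^j\,\bar f] : j\ge 0\}$ inside~$C/B$ is finite-dimensional: any nontrivial $k(t)$-linear dependence among $[\bar f],[D_t\,\bar f],\ldots,[D_t^m\,\bar f]$ produces the desired operator, and reading off the corresponding element of~$B$ produces its certificate. I would prove this finiteness by induction on the number~$n$ of variables. The base case~$n=1$ is immediate from Lemma~\ref{LM:tele}: with~$i=1$ it yields a nonzero~$L\in k(t)\lrD$ and~$g$ with $L(f_1)=D_1(g)$, so $L\cdot[\bar f]=0$ already and the span has dimension at most~$\ord(L)$.

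For the inductive step the idea is to integrate out the last variable~$x_n$ and descend to an $(n-1)$-variable configuration over $k(t,x_1,\ldots,x_{n-1})$, to which the inductive hypothesis applies. Concretely, I would invoke Lemma~\ref{LM:tele} and take a common left multiple in the Euclidean domain $k(t,x_1,\ldots,x_{n-1})\lrD$ to obtain a single nonzero operator, free of~$x_n$, that simultaneously makes each~$D_t^j(f_i)$ exact in the~$x_n$-direction. The resulting reduction should send the closed $n$-tuple to a closed $(n-1)$-tuple, commute with the~$D_t$-action, and thereby pull the finite-dimensionality statement back from~$n-1$ to~$n$.

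The step I expect to be the main obstacle is precisely this descent. Lemma~\ref{LM:tele} only guarantees an operator that is free of the \emph{one} variable being eliminated; its coefficients still involve~$x_1,\ldots,x_{n-1}$, and such an operator~$L$ does not commute with~$D_1,\ldots,D_{n-1}$. A direct computation shows that applying~$L$ to a closed tuple destroys closedness by commutator terms of the shape $[D_i,L](f_j)-[D_j,L](f_i)=\delta_i(L)(f_j)-\delta_j(L)(f_i)$, where~$\delta_i$ differentiates the coefficients of~$L$. The technical heart is thus to show that these correction terms are themselves coboundaries and can be absorbed, so that the reduced tuple is genuinely closed over~$k(t,x_1,\ldots,x_{n-1})$ and the induction goes through; I would handle this by a careful choice of the eliminating operator together with bookkeeping of the certificates generated along the way.
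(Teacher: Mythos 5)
Your cohomological reformulation is sound and your overall strategy---induction on~$n$, eliminating~$x_n$ via Lemma~\ref{LM:tele}---is the same as the paper's. But there is a genuine gap exactly where you flag it: you correctly observe that the operator supplied by Lemma~\ref{LM:tele} lies only in $k(t,x_1,\ldots,x_{n-1})\lrD$, hence does not commute with $D_1,\ldots,D_{n-1}$ and destroys closedness of the reduced tuple, and you then defer the resolution to ``a careful choice of the eliminating operator together with bookkeeping of the certificates.'' That deferred step is the entire content of the inductive step, and your proposed mechanism---showing the commutator correction terms are coboundaries and absorbing them---is not how the difficulty is actually resolved, nor is it clear that it could be carried out as stated.

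The missing idea is the following. Among all nonzero $x_n$-free operators $L_n$ satisfying $L_n(f_n)=D_n(g_n)$ for some $g_n$ in the module, choose one of minimal degree in~$D_t$ and normalize it to be monic in~$D_t$. For each $i<n$ the commutator $L_i:=[D_i,L_n]$ then has strictly smaller degree in~$D_t$, and the compatibility condition $D_i(f_n)=D_n(f_i)$ together with the $x_n$-freeness of $L_n$ yields $L_i(f_n)=D_n\bigl(D_i(g_n)-L_n(f_i)\bigr)$; if $L_i$ were nonzero it would be a shorter $x_n$-free telescoper for $f_n$, contradicting minimality. Hence $L_i=0$ for all $i$, which by Fact~\ref{FACT:bracket} forces $L_n\in k(t)\lrD$. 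So the correction terms you worry about vanish identically rather than needing to be absorbed: setting $\tilde f_i=D_i(g_n)-L_n(f_i)$, one gets $D_n(\tilde f_i)=0$, so the $\tilde f_i$ are $D$-finite over $k(t,\vx_{n-1})$ by Lemma~\ref{LM:eliminate} and compatible with respect to $\vx_{n-1}$ precisely because $L_n$ now commutes with every $D_i$. Without this minimality-plus-commutator argument your induction does not close.
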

\begin{proof}
Set~$\vx_m=(x_1, \ldots, x_m)$,  and set~$\bbR_m :=k(t, \vx_m)\langle D_t, D_1, \ldots, D_m \rangle$
for all~$m$ with~$1 {\le} m {\le} n$.

We proceed by induction on~$n$.
If~$n{=}1$, then~$f_1$ has a telescoper in~$k(t)\langle D_t \rangle$ with respect to~$\vx_1$
by Lemma~\ref{LM:tele}.
Assume that the lemma holds for any~$n-1$ elements that are both $D$-finite over~$k(t, \vx_{n-1})$ and compatible with
respect to~$\vx_{n-1}$.

Assume that~$f_1, f_2, \ldots, f_n$ are $D$-finite over~$k(t, \vx_n)$ and
compatible with respect to~$\vx_n$.
Denote by~$N$ the submodule generated by~$f_1$, \ldots,~$f_n$ over~$\bbR_n$.
By Lemma~\ref{LM:tele},
there exists  a nonzero operator~$L_n$ in~$k(t, \vx_{n-1})\lrD$
such that
\begin{equation} \label{EQ:teleM}
L_n (f_n) = D_n(g_n) \quad \mbox{for some~$g_n \in N.$}
\end{equation}
Without loss of generality,  we further assume that~$L_n$ in~\eqref{EQ:teleM} is of minimal degree in~$D_t$ and is monic with respect to~$D_t$.

First, we show that~$L_n$ belongs to~$k(t)\lrD$.
For all~$i$ with~$1 \le i \le n-1$, we set~$L_i = [D_i, L_n]$,
which belongs to~$k(t, \vx_{n-1})\langle D_t \rangle$ by Fact~\ref{FACT:bracket}~(ii), and has degree in~$D_t$ less than that of~$L_n$,
because~$L_n$ is monic with respect to~$D_t$. Note that~$L_n D_i (f_n) = L_n D_n(f_i) = D_n L_n (f_i)$,
in which the first equality is immediate from the compatibility condition~$D_i(f_n)=D_n(f_i)$, and the second from the fact
that~$L_n$ is free of~$x_n$. Thus,~$L_i(f_n) = D_i L_n(f_n) -  D_n L_n(f_i),$
which, together with~\eqref{EQ:teleM}, implies that
\begin{equation} \label{EQ:freeM}
L_{i}(f_n) = D_i D_n (g_n) - D_n L_n (f_i) = D_n \left( \tilde{f}_i \right),
\end{equation}
where~$\tilde{f}_{i}:=D_i(g_n) - L_n(f_i)$ for~$i=1, \ldots, n-1$.
Since~$\tilde{f}_{i}$  belongs to~$N$, we see that~$L_i=0$, for otherwise,~$L_n$ would not be
a nonzero operator that satisfies~\eqref{EQ:teleM} and has minimal degree in~$D_t$ by~\eqref{EQ:freeM}.
Thus,~$L_n$ is free of~$x_i$ by Fact~\ref{FACT:bracket}~(i).
Accordingly,~$L_n \in k(t)\langle D_t \rangle$.
Moreover,~$L_{i}=0$ and~\eqref{EQ:freeM} imply
\begin{equation} \label{EQ:const}
D_n(\tilde{f}_{i}) =0 \quad \mbox{for all~$i$ with~$1 \le i \le n-1$}.
\end{equation}

Next, we apply the induction hypothesis to~$\tilde{f}_{1}, \ldots, \tilde{f}_{n-1}$.
Since~$f_1, \ldots, f_n$ are $D$-finite over~$k(t, \vx_n)$, so is~$g_n$, and so is~$\tilde{f}_{i}$ for all~$i$ with~$1 \le i \le n-1$. By~\eqref{EQ:const} and
Lemma~\ref{LM:eliminate},~$\tilde{f}_{i}$
is $D$-finite over~$k(t, \vx_{n-1}).$
Moreover,~$\tilde{f}_{1}, \ldots, \tilde{f}_{n-1}$ are compatible with respect to~$\vx_{n-1}$
because~$f_1$, \ldots, $f_{n-1}$
are compatible with respect to~$\vx_{n-1}$ and because~$L_n$ is free of~$\vx_{n-1}$.
Therefore, there exist a nonzero
operator~$\tilde{L} \in k(t)\langle D_t \rangle$ and
an element~$\tilde{g}$  in the submodule generated by~$\tilde{f}_{1}, \ldots, \tilde{f}_{n-1}$
over~$\bbR_{n-1}$
such that
\begin{equation} \label{EQ:pt}
   \tilde{L}\left(\tilde f_{i}\right) {=} D_i\left(\tilde{g}\right)~\mbox{for~$i {\in} \{1, \ldots, n-1\}$}~~\text{and}~~D_n\left(\tilde{g} \right) {=} 0.
\end{equation}
The first equality in~\eqref{EQ:pt} is due to the induction hypothesis, and the second
due to~\eqref{EQ:const}.
Moreover,~$\tilde{g}$ belongs to~$N$.

At last, we verify that~$\tilde{L} L_n$ is a parallel telescoper for $f_1,$ \ldots, $f_n$.
Set~$g= \tilde{L}(g_n)-\tilde{g}$. It belongs to~$N$ because both~$g_n$ and~$\tilde{g}$ do.
For~$i \in \{1, \ldots, n-1\}$,
$\tilde{L} L_n (f_i) = \tilde{L}\left(D_i(g_n) - \tilde f_{i}\right)$
by the definition of~$\tilde f_i$ in~\eqref{EQ:freeM}. It follows from~$\tilde{L}D_i = D_i \tilde{L}$ and the first equality of~\eqref{EQ:pt}
that, for all~$i \in \{1, \ldots, n-1\}$,
\[ \tilde{L} L_n (f_i) = D_i \tilde{L}(g_n) - D_i \left( \tilde{g} \right)
= D_i \left( \tilde{L}(g_n) - \tilde{g}  \right) = D_i(g). \]
Applying~$\tilde{L} L_n$ to~$f_n$, we get
\[ \tilde{L} L_n(f_n) = \tilde{L} D_n(g_n) = D_n \left(\tilde{L}(g_n)\right) = D_n (g), \]
in which the first equality follows from~\eqref{EQ:teleM} and the last from the second one in~\eqref{EQ:pt}. Therefore,~$\tilde{L} L$
is indeed a parallel telescoper for~$f_1, \ldots, f_n$ with respect to~$\vx_n$.
\end{proof}

The next theorem is a necessary and sufficient condition on the existence
of parallel telescopers for $D$-finite elements.

\begin{thm}\label{THM:existence}
Let~$f_1, \ldots, f_n$ be $D$-finite elements of an $\bbR$-module. 
Then they have a parallel telescoper with respect to~$\vx$
if and only if there exists a nonzero operator~$P {\in} k(t)\lrD$ such that
\begin{equation} \label{EQ:pcc}
P(D_{i}(f_j) - D_{j}(f_i)) = 0\quad \text{for all~$1\leq i<j\leq n$}.
\end{equation}
\end{thm}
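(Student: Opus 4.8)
The plan is to prove the two directions separately, using Lemma~\ref{LM:suffcond} as the essential engine for sufficiency and a direct argument for necessity.

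\medskip

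\noindent\textbf{Sufficiency.} Suppose a nonzero $P \in k(t)\lrD$ satisfies~\eqref{EQ:pcc}. The idea is to apply $P$ to all of $f_1, \ldots, f_n$ and reduce to the compatible case already handled by Lemma~\ref{LM:suffcond}. Set $\hat f_i := P(f_i)$ for each $i$. Since $P \in k(t)\lrD$ commutes with every $D_j$, I compute
\[
D_i(\hat f_j) - D_j(\hat f_i) = P\bigl(D_i(f_j) - D_j(f_i)\bigr) = 0,
\]
so $\hat f_1, \ldots, \hat f_n$ are compatible with respect to~$\vx$. They are also $D$-finite over~$\bK$, being elements of the submodule generated by the $D$-finite $f_i$. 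Lemma~\ref{LM:suffcond} then furnishes a parallel telescoper $\tilde L \in k(t)\lrD$ for $\hat f_1, \ldots, \hat f_n$, so $\tilde L(\hat f_i) = D_i(g)$ for some $g$ in the submodule they generate. Then $L := \tilde L P$ is a nonzero operator in $k(t)\lrD$ with $L(f_i) = \tilde L(\hat f_i) = D_i(g)$, and $g$ lies in the submodule generated by $f_1, \ldots, f_n$, so $L$ is a parallel telescoper.

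\medskip

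\noindent\textbf{Necessity.} Suppose $L \in k(t)\lrD$ is a parallel telescoper, so $L(f_i) = D_i(g)$ for a single $g$ in the submodule generated by the $f_i$. Since $L$ has coefficients only in $k(t)$, it commutes with each $D_j$. For any pair $i < j$ I apply $L$ to the compatibility defect and exploit this commutation:
\[
L\bigl(D_i(f_j) - D_j(f_i)\bigr) = D_i\bigl(L(f_j)\bigr) - D_j\bigl(L(f_i)\bigr) = D_i D_j(g) - D_j D_i(g) = 0,
\]
where the last equality uses $D_i D_j = D_j D_i$. Thus $P := L$ itself satisfies~\eqref{EQ:pcc}, which proves necessity.

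\medskip

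\noindent I expect the substantive content to reside entirely in the sufficiency direction, since necessity is the short commutation computation above. The main conceptual point is recognizing that applying $P$ to the $f_i$ creates a genuinely compatible family without leaving the submodule, so that the hard inductive work of Lemma~\ref{LM:suffcond} applies verbatim; the only subtlety to check carefully is that $\tilde L P$ remains nonzero in the domain $k(t)\lrD$ (immediate, as this ring has no zero divisors) and that the resulting certificate $g$ genuinely lies in the submodule generated by the original $f_i$ rather than the $\hat f_i$ (which holds because that submodule contains each $\hat f_i$).
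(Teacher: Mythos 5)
Your proof is correct and follows essentially the same route as the paper's: necessity via the commutation of $L \in k(t)\lrD$ with each $D_i$ and the symmetry $D_iD_j(g)=D_jD_i(g)$, and sufficiency by applying $P$ to produce the compatible family $P(f_1),\ldots,P(f_n)$ and invoking Lemma~\ref{LM:suffcond} to obtain $L$ with $LP$ the desired parallel telescoper. The extra checks you note (nonzeroness of $\tilde L P$ and membership of the certificate in the original submodule) are fine and only make the argument slightly more explicit than the paper's.
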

\begin{proof}  Assume that~$f_1, \ldots, f_n$ have a parallel telescoper~$P$ with respect to~$\vx$.
Then there exists an element~$g$ in the submodule generated by~$f_1$, \ldots, $f_n$
such that~$P(f_i) = D_{i}(g)$ for all~$i$ with~$1 \le i \le n$. Since~$D_{i}D_{j}(g) = D_{j}D_{i}(g)$,
we have~$P(D_{i}(f_j) - D_{j}(f_i)) = 0$.

Conversely, assume that there exists a nonzero operator $P\in k(t)\lrD$ such that
\[P(D_{i}(f_j) - D_{j}(f_i)) = 0\quad \text{for all~$1\leq i<j\leq n$}.\]
Then~$P(f_1), \ldots, P(f_n)$ are compatible, because~$P$ is free of~$\vx$.  So there is a parallel telescoper~$L$
for~$P(f_1), \ldots, P(f_n)$ by Lemma~\ref{LM:suffcond}. Therefore,~$LP$ is a parallel telescoper for~$f_1,$ \ldots, $f_n$
with respect to~$\vx$.
\end{proof}
%


\section{Hyperexponential case}\label{SECT:hyper}

Let~$\bE$ be a differential field extension of~$(\bK, \Delta)$. The set of extended derivations
on~$\bE$ is also denoted by~$\Delta$. The derivations in~$\Delta$ are assumed to commute with each other.
Furthermore, we assume the subfield of constants in~$\bE$ is~$k$.

For an element~$h \in \bE$ and an operator~$L \in \bbR$ of the form
\[ L = \sum_{i, j_1, \ldots, j_n\geq 0} a_{i,j_1, \ldots, j_n} D_t^i D_1^{j_1} \cdots D_n^{j_n} \]
with~$a_{i,j_1, \ldots, j_n} \in \bK$,  we define the application of~$L$ to~$h$ as
\[ L(h) = \sum_{i, j_1, \ldots, j_n \geq 0} a_{i,j_1, \ldots, j_n} \delta_t^i \circ \delta_1^{j_1} \circ \cdots \circ \delta_n^{j_n}(h).  \]
Then~$\bE$ is an $\bbR$-module whose multiplication is the application of an operator in~$\bbR$ to an element of~$\bE$.

A nonzero element~$h \in \bE$ is said to be {\em hyperexponential} over~$\bK$ if the logarithmic derivative~$\delta(h)/h$ belongs to~$\bK$
for all~$\delta \in \Delta$.  Hyperexponential functions are $D$-finite elements.
In fact, the submodule generated by several hyperexponential functions
over~$\bbR$ is the linear space spanned by them. Two hyperexponential functions are said to be {\em similar} if their ratio belongs to~$\bK$.

\subsection{Determining the existence}\label{subsec:hypexpexistence}
The next proposition  allows one to determine the existence of parallel telescopers for
hyperexponential functions.
\begin{prop} \label{PROP:exp}
Let~$h \in \bE$ be hyperexponential over~$\bK$.
Then~$\ann(h) \cap k(t) \langle D_t \rangle \neq \{0\}$ if and only if
the logarithmic derivative of~$h$ with respect to~$t$ is of the form
\begin{equation}  \label{EQ:sep}
\frac{\delta_t(p)}{p} + r \quad \text{ for some~$p \in k(\vx)[t]$ and~$r \in k(t)$}.
\end{equation}
\end{prop}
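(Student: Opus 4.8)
The plan is to translate the statement into a linear-dependence condition for the sequence $c_j := \delta_t^j(h)/h$. Since $h$ is hyperexponential, each $c_j$ lies in $\bK$, with $c_0=1$ and $c_{j+1}=\delta_t(c_j)+u\,c_j$, where $u:=\delta_t(h)/h$. For $L=\sum_{j}a_j D_t^j\in k(t)\lrD$ one has $L(h)=\bigl(\sum_j a_j c_j\bigr)h$, so $\ann(h)\cap k(t)\lrD\neq\{0\}$ is equivalent to the existence of a nontrivial $k(t)$-linear relation among the $c_j$. For the direction $(\Leftarrow)$ I would write $h=p\,w$ with $\delta_t(w)=r\,w$; since $r\in k(t)$, the operator $D_t-r$ lies in $k(t)\lrD$, and conjugation by $w$ gives $w^{-1}(D_t-r)^{m+1}w=D_t^{m+1}$ with $m=\deg_t p$. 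As $\delta_t^{m+1}(p)=0$, this yields $(D_t-r)^{m+1}(h)=(D_t-r)^{m+1}(pw)=w\,\delta_t^{m+1}(p)=0$, exhibiting a nonzero annihilator in $k(t)\lrD$.

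For $(\Rightarrow)$, assume a monic $L=D_t^{d}+\sum_{j<d}a_j D_t^j\in k(t)\lrD$ annihilates $h$, so $c_d+\sum_{j<d}a_j c_j=0$. The heart of the argument is a local (valuation) analysis of $u$ at a pole $\alpha\in\overline{k(\vx)}\setminus k$, that is, a pole that genuinely depends on $\vx$. At such an $\alpha$ every coefficient $a_j\in k(t)$ is regular, because the poles of elements of $k(t)$ lie in $k$. Writing $\nu_\alpha$ for the order at $t=\alpha$ and using $\delta_t(\alpha)=0$ together with the recurrence $c_{j+1}=\delta_t(c_j)+u c_j$, I would show that if $u$ has a pole of order $\geq 2$ at $\alpha$, or a simple pole whose residue is not a positive integer, then $\nu_\alpha(c_j)$ is strictly decreasing in $j$ (equal to $j\,\nu_\alpha(u)$ when the pole order exceeds one, and with leading coefficient a nonvanishing falling factorial of the residue in the simple-pole case). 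Then the top term $c_d$ strictly dominates every $a_j c_j$ with $j<d$, so $\nu_\alpha\bigl(c_d+\sum_{j<d}a_jc_j\bigr)=\nu_\alpha(c_d)<\infty$, contradicting the vanishing of the sum. Hence every $\vx$-dependent pole of $u$ is simple with a positive integer residue; these poles are permuted by $\Gal(\overline{k(\vx)}/k(\vx))$ with constant (integer) residues on each orbit, so collecting them yields $p\in k(\vx)[t]$ for which $\delta_t(p)/p$ equals the sum of the principal parts of $u$ at those poles. Setting $r:=u-\delta_t(p)/p\in k(\vx)(t)$ then produces a function whose only poles lie in $k$ and which has a polynomial part in $t$.

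It remains to prove $r\in k(t)$, and this is the step I expect to be the main obstacle, since a rational function in $k(\vx)(t)$ with poles only at constants need not be free of $\vx$. The plan is to show $\delta_i(r)=0$ for all $i$, which forces $r$ into the common $\delta_i$-constants of $k(\vx)(t)$, namely $k(t)$. A first input is that $L$ is free of $x_i$, so $[D_i,L]=0$ and $L(\delta_i(h))=0$; writing $v_i:=\delta_i(h)/h\in\bK$ this gives $\delta_i(u)=\delta_t(v_i)$, whence $\delta_i(u)$ has all its $t$-residues equal to zero, and consequently the residues of $r$ at its (constant) poles already lie in $k$. Upgrading this from residues to all Laurent coefficients at each constant pole and to the full polynomial part is the crux: I would carry out a local analysis of the relation $c_d+\sum_{j<d}a_jc_j=0$ at each singular point $\beta\in k$ of $L$ and at $t=\infty$, where the data of $L$ lives in $k(t)$ and $k$ is algebraically closed, to conclude that the corresponding coefficients of $u$ are algebraic over $k$ and, lying in $k(\vx)$, must belong to $k$. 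Combining the two directions with this local analysis yields the stated decomposition.
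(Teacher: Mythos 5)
Both of your directions are headed the right way, and your ($\Leftarrow$) argument is complete and in fact a little cleaner than the paper's: the paper annihilates $h$ by the symmetric product of an annihilator of $p$ with $D_t-r$, whereas your identity $(D_t-r)^{j}(pw)=\delta_t^{j}(p)\,w$ hands you the explicit annihilator $(D_t-r)^{m+1}\in k(t)\langle D_t\rangle$ in one line. For ($\Rightarrow$) the paper does not argue locally by hand; it quotes the structure theorem for rational solutions of Riccati equations (formula (4.3) in van der Put--Singer, equivalently Bronstein's Theorem 1), which writes $\delta_t(h)/h=\delta_t(P)/P+Q+R/S$ with the roots of $S$ among the singular points of $L$ and the roots of $P$ nonsingular, and then observes that since $L\in k(t)\langle D_t\rangle$ and $k$ is algebraically closed, the procedures producing $Q$ and $R/S$ operate entirely over $k(t)$, so $Q+R/S\in k(t)$; only the logarithmic-derivative part needs descent from $\overline{k(\vx)}[t]$ to $k(\vx)[t]$. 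Your valuation analysis at the $\vx$-dependent poles is a correct self-contained proof of the ``nonsingular point'' half of that theorem (such poles are automatically ordinary points of $L$, since the $a_j$ lie in $k(t)$ and $L$ can be taken monic), and your Galois-descent construction of $p$ is fine; the residue argument via $\delta_i(u)=\delta_t(v_i)$ is a nice additional observation, though the paper gets the residues for free from the same citation.

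The one genuine gap is exactly the step you flag as the crux: proving that at each singular point $\beta\in k$ of $L$, and at $t=\infty$, the principal part (respectively the polynomial part) of $u$ has coefficients algebraic over $k$. Your dominance argument does not extend there: the $a_j$ may have poles at $\beta$ of unrelated orders, so the would-be leading term of $c_d$ need not dominate, and the algebraic relation satisfied by the top Laurent coefficient of $u$ can degenerate, forcing a recursion on lower-order coefficients (this is the Newton-polygon branching in the Riccati/exponential-parts algorithm). Carrying that out from scratch is substantially more work than ``match leading terms,'' and as written your proposal only asserts the conclusion. Since this is precisely what the cited structure theorem supplies, the efficient repair is to invoke it rather than reprove it; with that reference in place your argument closes, because an element of $k(\vx)$ that is algebraic over $k$ must lie in $k$.
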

\begin{proof}
Assume that~$\delta_t(h)/h$ is of the form~\eqref{EQ:sep}.
Since~$p$ is a polynomial in~$t$ over~$k(\vx)$, there exists a nonzero operator~$L$
in~$k(t)\langle D_t \rangle$ annihilating~$p$.
It is easy to verify that~$(D_t - r)(h/p)=0$.  Therefore,~$h$ is annihilated  by a nonzero
operator in~$k(t)\langle D_t \rangle$. Such an operator is
the symmetric product of~$L$ and~$D_t - r$.

Conversely, assume that there exists a nonzero element~$L{\in}\ann(h) \cap k(t) \langle D_t \rangle$.  Then~$\delta_t(h)/h$
is a rational solution of the Riccati equation associated to~$L$, although it does not have to be in~$k(t)$. By formula~(4.3) in~\cite[page 107]{VdputSinger2003},
\[ \frac{\delta_t(h)}{h} = \frac{\delta_t(P)}{P} + Q + \frac{R}{S}, \]
where~$P, Q, R$ and~$S$ are polynomials in~$t$ over the algebraic closure of~$k(\vx)$, the roots of~$S$ are singular points of~$L$,
and the roots of~$P$ are nonsingular ones (see also~\cite[Theorem~1]{Bronstein1992}). Moreover, one can assume that~$\deg_t(R) < \deg_t(S)$ and that~$S$ is monic. Since the singular points of~$L$
are in~$k$, the coefficients of~$S$ are in~$k$ as well.
Following the algorithm for computing rational solutions of Riccati equations
described in~\cite[{\S}\,4.3]{Bronstein1992} or~\cite[Exercise 4.10]{VdputSinger2003},
we see that~$R$ belongs to~$k[t]$. The same conclusion holds for~$Q$
by the algorithm in~\cite[{\S}\,4.2]{Bronstein1992}, as~$Q$ is constructed by analyzing the pole of the associated Riccati equation at infinity.
Set~$r= Q +  R/S$, which belongs to~$k(t)$, and set~$s=\delta_t(h)/h-r$, which is in~$k(t, \vx)$ and equal to~$\delta_t(P)/P$.
Thus, the linear differential equation~$\delta_t(Y)=s Y$ has a polynomial solution~$P$.
Since~$s$ belongs to~$k(t, \vx)$, the equation must have a polynomial solution~$p$ in~$k(\vx)[t]$,
which implies that~$\delta_t(p)/p = s$. Then, the logarithmic derivative~$\pa h/\pa t$ is of the form~\eqref{EQ:sep}.
\end{proof}

One can decide if the logarithmic derivative~$\delta_t(h)/h$ in Proposition~\ref{PROP:exp} is of the form~\eqref{EQ:sep} by computing its squarefree partial fraction decomposition with respect to~$t$.
A more efficient way is to apply Algorithm \textsf{WeakNormalizer} in~\cite[{\S}\,6.1]{BronsteinBook} to~$\delta_t(h)/h$,
which delivers a polynomial~$p$ in~$k(\vx)[t]$ such that the difference of~$\delta_t(h)/h$ and~$\delta_t(p)/p$
belongs to~$k(t)$ if and only if~$\delta_t(h)/h$
is of the form~\eqref{EQ:sep}.

Let~$h_1, \ldots, h_n$ be hyperexponential functions. Then~$h_1,$ \ldots, $h_n$
have a parallel telescoper with respect to~$\vx$ if and only if, for every pair~$i, j$ with~$1 \le i < j \le n$,
there exists a nonzero operator
$P_{i,j} \in k(t)\langle D_t
\rangle$ such that
\begin{equation}\label{EQ:ij}
P_{i,j}\left(D_i(h_j)-D_j(h_i) \right) = 0.
\end{equation}
This is because the least common left multiple of the~$P_{i,j}$ can be taken as the operator~$P$ in~\eqref{EQ:pcc} of Theorem~\ref{THM:existence}.
For each pair~$(h_i, h_j)$, there are three cases to be considered: (i) If~$D_i(h_j)=D_j(h_i)$, then set~$P_{i, j}=1$.
(ii) If~$h_i$ is similar to~$h_j$, then the difference~$D_i(h_j)-D_j(h_i)$ is hyperexponential. So we can
find~$P_{i, j}$ by Proposition~\ref{PROP:exp}. (iii) If~$h_i$ is not similar to~$h_j$,
then~\eqref{EQ:ij} implies that both~$P_{i,j}(D_i(h_j))$ and~$P_{i, j}(D_j(h_i))$ are equal to zero.
Proposition~\ref{PROP:exp} is also applicable to the last case.

\begin{example}\label{Exam:existence}
Consider the hyperexponential functions
\[h_1 = \frac{t(x_1{+}t{+}t^2u)}{u(t+x_1)\sqrt{t}}, \,\,\,  h_2 = \frac{((t{+}1)^2{+}x_1x_2{+}t(x_1{-}1))u{-}tx_1}{u(t+x_2)\sqrt{t}},\]
where~$u := t+x_1+x_2$. A direct calculation yields
\[h := D_2(h_1) - D_1(h_2) =-\frac{1}{\sqrt{t}} \]
The logarithmic derivative of~$h$ in~$t$ belongs to~$k(t)$.
Then $P := 2tD_t+1$ is the operator in~$k\lrD$ such that~$P(h)=0$.
So~$h_1$ and~$h_2$ have a parallel telescoper with respect to~$x_1$ and~$x_2$ by Proposition~\ref{PROP:exp}.
\end{example}

\subsection{Computing minimal parallel telescopers}\label{subsec:computingparatele}

This subsection is devoted to computing minimal parallel telescopers.
First, we present
a recursive algorithm, named \textsf{ParaTele},  for hyperexponential functions
that are both compatible and similar.
Next, we show that the algorithm can
be easily adapted to compute
minimal parallel telescopers for merely compatible hyperexponential functions.

\smallskip \noindent
{{\bf Algorithm}~\textsf{ParaTele}}:~Given compatible functions~$r_1h,$ \ldots, $r_nh$,
where~$h$ is hyperexponential over~$\bK$ and $r_1$, \ldots, $r_n$ are rational functions in~$\bK$,
compute a minimal parallel telescoper~$L(t, D_t)$ for
$r_1 h,  \ldots, r_n  h$ with respect to~$\vx$ and a certificate~$g$ of~$L$.
\begin{enumerate}
\item Compute a minimal telescoper~$L_n$ for~$r_n h$ with
certificate~$g_n$ by the algorithms in~\cite{Almkvist1990, BCCLX2013}.
\item If~$n=1$, then return~$(L_n, g_n)$; otherwise, set
\[\text{$\tilde f_i := D_i(g_n) - L_n(r_i h)$ for~$i = 1, \ldots, n-1$.}\]
\item \label{Step3} Run {\textsf{ParaTele}} for functions~$\tilde f_1, \ldots, \tilde f_{n-1}$
to get $(\tilde L,  \tilde g)$,
where~$\tilde L$ is of minimal order and $\tilde g$ is in the submodule
generated by $\tilde f_i$'s
over the ring $k(t, x_1, {\ldots}, x_{n-1})\langle D_1, {\ldots}, D_{n-1} \rangle$.
\item Return~$L := \tilde L L_n$ and~$g := \tilde L (g_n) - \tilde g$.
\end{enumerate}


Note that some of the rational functions~$r_1$, \ldots,~$r_n$ in Algorithm~\textsf{ParaTele}
may be equal to zero. So the input consists of either zero or similar hyperexponential functions.
This guarantees that the recursion in step~$3$ can be executed.

It follows from the proof of Lemma~\ref{LM:suffcond} that Algorithm \textsf{ParaTele} always computes
 a parallel telescoper. To show its minimality, we need a lemma that
plays a similar role for hyperexponential functions as Lemma~\ref{LM:eliminate} for $D$-finite ones.

Recall that~$\vx_{m}=(x_1, \ldots, x_{m})$ and~$\bbR_{m}$
denotes $k(t, \vx_{m})\langle D_t, D_1, \ldots, D_{m} \rangle$, where~$m=1, \ldots, n$.
\begin{lem} \label{LM:var}
Let~$h_1, \ldots, h_{m}$ be hyperexponential elements of~$\bE$.
Assume that, for all~$i$ with~$1 \le i \le m$,
\begin{equation} \label{EQ:zero}
D_{m+1}(h_i)=  \cdots = D_n(h_i)=0.
\end{equation}
Let~$N$ and~$N_m$ be the submodule generated by~$h_1$, \ldots, $h_{m}$ over~$\bbR$ and~$\bbR_m$, respectively.
If there exists a nonzero operator~$T \in k(t)\lrD$ and~$a \in N$ such that
$$T(h_i)=D_i(a) \quad \text{for all~$i$ with~$1 \le i \le m,$} $$
 then there exists~$b \in N_m$
such that
$$T(h_i)=D_i(b) \quad \text{for all~$i$ with~$1 \le i \le m$}.$$
In other words,~$T$ is a parallel telescoper for~$h_1$, \ldots,~$h_m$ with respect to~$\vx_{m}$.
\end{lem}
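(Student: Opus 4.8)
The plan is to transport the given certificate $a\in N$ down to the smaller module $N_m$ by \emph{evaluating} the redundant variables $x_{m+1},\ldots,x_n$ at a generic point of $k^{\,n-m}$, and then to check that the first-order relations $D_i(a)=T(h_i)$ survive this evaluation. Recall that, since each $h_j$ is hyperexponential, $N$ is the $\bK$-linear span of $h_1,\ldots,h_m$ and $N_m$ is their $k(t,\vx_m)$-linear span. The whole point will be that, in a suitable basis, the entire differential system defining a parallel telescoper has coefficients in $k(t,\vx_m)$, so that evaluation of $x_{m+1},\ldots,x_n$ commutes with everything in sight.

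First I would record the structural facts. For $h_j$ and $\delta\in\Delta$ the logarithmic derivative $\delta(h_j)/h_j$ lies in $\bK$, and because the derivations commute these satisfy the compatibility identity $\delta_\ell\!\left(\delta(h_j)/h_j\right)=\delta\!\left(\delta_\ell(h_j)/h_j\right)$. Taking $\ell>m$ and using the hypothesis $\delta_\ell(h_j)=0$ shows that $\delta_t(h_j)/h_j$ and each $\delta_i(h_j)/h_j$ with $1\le i\le m$ is a rational function killed by $\partial/\partial x_\ell$ for every $\ell>m$, hence lies in $k(t,\vx_m)$. Iterating, $\delta_t^{\,s}(h_j)\in k(t,\vx_m)\,h_j$ for all $s$, so $T(h_i)=T_i\,h_i$ for some $T_i\in k(t,\vx_m)$. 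Next I would pick representatives $H_1,\ldots,H_r$ of the distinct similarity classes occurring among $h_1,\ldots,h_m$. Hyperexponential elements in distinct similarity classes are $\bK$-linearly independent, while inside a class the ratio $\rho_j:=h_j/H_{s(j)}$ again has $\delta_\ell(\rho_j)=0$ for $\ell>m$ and so lies in $k(t,\vx_m)$. Hence $\{H_1,\ldots,H_r\}$ is \emph{simultaneously} a $\bK$-basis of $N$ and a $k(t,\vx_m)$-basis of $N_m$, and the structure constants $v_{i,s}:=\delta_i(H_s)/H_s$ (for $i\le m$) lie in $k(t,\vx_m)$.

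With this the construction of $b$ is immediate. Write $a=\sum_{s=1}^{r}\gamma_s H_s$ with $\gamma_s\in\bK$ uniquely determined; fix a point $(\alpha_{m+1},\ldots,\alpha_n)\in k^{\,n-m}$ avoiding the finitely many hypersurfaces on which the denominator of some $\gamma_s$ degenerates (possible since $k$ is infinite); set $\bar\gamma_s:=\gamma_s|_{x_{m+1}=\alpha_{m+1},\ldots,x_n=\alpha_n}\in k(t,\vx_m)$ and $b:=\sum_s\bar\gamma_s H_s\in N_m$. Expanding the identities $D_i(a)=T(h_i)$ in the basis $\{H_s\}$ and comparing coordinates gives, for each $i\le m$, scalar equations of the shape $\delta_i(\gamma_s)+\gamma_s v_{i,s}=c_{i,s}$, where the targets $c_{i,s}$ are the coordinates of $T(h_i)=T_i h_i=T_i\rho_i H_{s(i)}$ and hence lie in $k(t,\vx_m)$. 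Every datum in these equations except $\gamma_s$ already lies in $k(t,\vx_m)$, and the evaluation $x_\ell\mapsto\alpha_\ell$ commutes with $\delta_i$ for $i\le m$ and fixes $k(t,\vx_m)$ pointwise; applying it therefore keeps the equations valid with $\gamma_s$ replaced by $\bar\gamma_s$. Summing back up yields $D_i(b)=\sum_s c_{i,s}H_s=T(h_i)$ for all $i$, which is exactly the assertion that $T$ is a parallel telescoper for $h_1,\ldots,h_m$ with respect to $\vx_m$ with certificate $b\in N_m$.

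The coordinate bookkeeping and the choice of a generic point I expect to be painless. The step carrying the real weight is the second paragraph: showing that the logarithmic derivatives and the transition ratios $\rho_j$ are free of $x_{m+1},\ldots,x_n$, so that $N$ and $N_m$ share a basis over which the whole first-order system has coefficients in $k(t,\vx_m)$. This is precisely what legitimizes naive evaluation of the redundant variables; once it is in hand, the compatibility of evaluation with $\delta_1,\ldots,\delta_m$ closes the argument.
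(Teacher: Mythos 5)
Your proposal is correct and follows essentially the same route as the paper's own proof: write the certificate $a$ in a $\bK$-basis of $N$ consisting of (representatives of) the similarity classes of the $h_i$, observe via the commutation of the derivations and the hypothesis $D_\ell(h_i)=0$ for $\ell>m$ that all logarithmic derivatives and transition data of the resulting first-order system lie in $k(t,\vx_m)$, and then specialize $x_{m+1},\ldots,x_n$ at a suitably generic point of $k^{n-m}$, using that this evaluation commutes with $\delta_1,\ldots,\delta_m$. The only cosmetic difference is that the paper works with a maximal $\bK$-linearly independent subset of $\{h_1,\ldots,h_m\}$ (citing a result of Li--Wu--Zheng to identify $T(h_i)$ as a single multiple $w_{i,s_i}h_{s_i}$), whereas you use similarity-class representatives and note directly that $T(h_i)\in k(t,\vx_m)\,h_i$; these amount to the same bookkeeping.
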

\begin{proof}
Without loss of generality, assume that~$\{h_1, \ldots, h_\ell\}$ is a maximal
linearly independent subset of~$\{h_1, \ldots, h_m\}$ over~$\bK$. Then~$a = \sum_{j=1}^\ell a_j h_j$
for some~$a_j \in \bK$, because~$N$ is
the linear space spanned by~$h_1$, \ldots,~$h_\ell$ over~$\bK$. Hence,
\begin{equation} \label{EQ:T1}
T(h_i) = \sum_{j=1}^\ell D_i(a_jh_j) = \sum_{j=1}^\ell \left( \delta_i(a_j) + a_j r_{i,j} \right) h_j,
\end{equation}
where~$r_{i,j}$ stands for the logarithmic derivative~$\delta_j(h_i)/h_i$ and~$i$ ranges from~$1$ to~$m$.
Then there exist~$s_i \in \{1, \ldots, \ell\}$ and~$w_{i,s_i} \in k(t,\vx_m)$
such that
$$T(h_i)=w_{i,s_i}h_{s_i} \quad \text{for all~$i \in \{1, \ldots, m\}$}.$$
In fact,~$s_i$ can be any integer between~$1$ and~$\ell$ and~$w_{i,s_i}$ must be zero if~$T(h_i)=0$; and~$s_i$ is unique
if~$T(h_i)$ is nonzero by Proposition~4.1 in~\cite{LiWuZheng2007}. Thus,~\eqref{EQ:T1} can be rewritten as
$$w_{i,s} h_s = \sum_{j=1}^\ell \left( \delta_i(a_j) + a_j r_{i,j} \right) h_j.$$
By the linear independence of~$h_1$, \ldots,~$h_\ell$, $T(h_i) = D_i(a)$ is equivalent to
\begin{equation} \label{EQ:cond}
\left\{ \begin{array}{l}
\delta_i(a_{s_i}) + a_s r_{i,{s_i}} = w_{i,s_i}, \\ \\
\delta_i(a_j) + a_j r_{i,j}=0~\text{for~$j {\in} \{1, \ldots, m \}$ with~$j {\neq} s_i$.}
\end{array} \right.
\end{equation}

Let~$\xi_{m+1}, \ldots, \xi_n \in k$ be such that~$b_i = a_i(\vx_m, \xi_{m+1}, \ldots \xi_n)$
is well-defined for all~$i$ with~$1 \le i \le \ell.$ Then~\eqref{EQ:cond}
still holds if we replace~$a_i$ by~$b_i$ for~$i=1$, \ldots, $m$.  This is because the
substitution of~$\xi_{m+1},$ \ldots,~$\xi_n$ for~$x_{m+1}$, \ldots,~$x_n$ commutes with~$\delta_i$ for
all~$i$ with~$1 \le i \le m$; and because both~$w_{i,s_i}$ and the~$r_{i,j}$'s are free of~$x_{m+1}$, \ldots,~$x_n$
by~\eqref{EQ:zero}.

Set~$b=\sum_{j=1}^\ell b_j h_j$, which
is in~$N_m$. It follows from~\eqref{EQ:cond} that~$T(h_i)=D_i(b)$ for all~$i$
with~$1 \le i \le m$.
\end{proof}
%
%

We now prove the correctness of Algorithm~\textsf{ParaTele}.
\begin{prop} \label{PROP:minimal}
Let~$h \in \bE$ be hyperexponential over~$\bK$ and~$r_1, \ldots, r_n \in \bK$.
If~$r_1 h, \ldots, r_n h$ are compatible,
then Algorithm~\textsf{ParaTele} computes a minimal parallel telescoper
for~$r_1 h, \ldots, r_n h$ with respect to~$\vx$.
\end{prop}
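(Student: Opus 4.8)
The plan is to argue by induction on $n$, reducing the minimality of the output $L=\tilde L L_n$ to the minimality of the telescoper $L_n$ computed in step~1 and, through the induction hypothesis, to the minimality of the operator $\tilde L$ returned by the recursive call in step~3. Throughout I shall use freely that the algorithm is correct, i.e.\ that $L$ is a parallel telescoper and that $L_n\in k(t)\lrD$, since both already follow from the proof of Lemma~\ref{LM:suffcond}. Because $k(t)\lrD$ is a left Euclidean domain, minimality of $L$ is equivalent to the inequality $\ord(\hat L)\ge\ord(L)$ holding for \emph{every} parallel telescoper $\hat L$ of $r_1h,\ldots,r_nh$: the minimal parallel telescoper right-divides $L$ (as $L$ lies in the ideal it generates), so this inequality forces the two orders to coincide and hence $L$ to be minimal as well. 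The base case $n=1$ is immediate, since a parallel telescoper for the single function $r_1h$ with respect to $x_1$ is exactly an ordinary telescoper, and step~1 returns a minimal one.

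For the inductive step, write $f_i=r_ih$ and let $\hat L$ be an arbitrary parallel telescoper, so that $\hat L(f_i)=D_i(g)$ for $1\le i\le n$ with $g$ in the $\bbR$-submodule $N=\bK h$ generated by the $f_i$. Reading off the case $i=n$ shows that $\hat L$ lies in the left ideal of ordinary telescopers of $f_n$; since $L_n$ generates this ideal (it is a minimal telescoper of $f_n$, and all certificates may be taken in $\bK h=N$), I obtain a factorization $\hat L=\hat L_1 L_n$ with $\hat L_1\in k(t)\lrD$ nonzero. The heart of the argument is to show that this cofactor $\hat L_1$ is itself a parallel telescoper for the reduced family $\tilde f_i=D_i(g_n)-L_n(f_i)$ passed to the recursion. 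Using $L_n(f_n)=D_n(g_n)$ together with the commutativity of $\hat L_1$ with $D_i$ and $D_n$, one computes for $1\le i\le n-1$ that
\[ \hat L_1(\tilde f_i)=D_i\bigl(\hat L_1(g_n)-g\bigr)\qquad\text{and}\qquad D_n\bigl(\hat L_1(g_n)-g\bigr)=0, \]
so $\hat L_1$ telescopes the $\tilde f_i$ with the common certificate $\tilde g':=\hat L_1(g_n)-g$, which lies in $N$. Since $D_n(\tilde f_i)=0$ (see~\eqref{EQ:const} in the proof of Lemma~\ref{LM:suffcond}), I may invoke Lemma~\ref{LM:var} with $m=n-1$ to replace $\tilde g'$ by a certificate in the submodule $N_{n-1}$ generated over $\bbR_{n-1}$, thereby exhibiting $\hat L_1$ as a genuine parallel telescoper for $\tilde f_1,\ldots,\tilde f_{n-1}$ with respect to $\vx_{n-1}$. (When all the $\tilde f_i$ vanish, the recursive telescoper is $1$ and the required inequality is trivial.)

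The induction hypothesis now applies to step~3: the recursive call returns a minimal parallel telescoper $\tilde L$ for $\tilde f_1,\ldots,\tilde f_{n-1}$, which are compatible and pairwise similar (or zero) by the proof of Lemma~\ref{LM:suffcond}. Hence $\tilde L$ right-divides $\hat L_1$, giving $\ord(\hat L_1)\ge\ord(\tilde L)$, and therefore
\[ \ord(\hat L)=\ord(\hat L_1)+\ord(L_n)\ \ge\ \ord(\tilde L)+\ord(L_n)=\ord(\tilde L L_n)=\ord(L), \]
which is precisely the inequality sought.

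I expect the principal obstacle to be the crux step of the second paragraph, namely that the cofactor $\hat L_1$ is again a parallel telescoper of the reduced family. This rests on two points requiring care: first, that $L_n$ is not merely \emph{a} telescoper of $f_n$ but a minimal one, so that it generates the telescoper ideal and yields the factorization $\hat L=\hat L_1 L_n$; and second, the transfer of the certificate from the module $N$ over $\bbR$ to the module $N_{n-1}$ over $\bbR_{n-1}$, which is exactly what Lemma~\ref{LM:var} supplies and what legitimizes comparing $\hat L_1$ against the output of the recursion.
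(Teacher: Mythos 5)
Your proposal is correct and follows essentially the same route as the paper's own proof: factor an arbitrary parallel telescoper as $\hat L=\hat L_1 L_n$ using the minimality of $L_n$, show that the cofactor $\hat L_1$ telescopes the reduced family $\tilde f_1,\ldots,\tilde f_{n-1}$ with certificate $\hat L_1(g_n)-g$, invoke Lemma~\ref{LM:var} (via $D_n(\tilde f_i)=0$) to move that certificate into $N_{n-1}$, and conclude by the minimality of $\tilde L$ from the recursive call. The only cosmetic difference is that you phrase the recursion as an explicit induction on $n$ and compare orders, whereas the paper directly exhibits $\hat L$ as a left multiple of $\tilde L L_n$; these are equivalent since $k(t)\lrD$ is a left Euclidean domain.
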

\begin{proof}
Set~$f_i = r_i h$ for~$i=1$, \ldots,~$n$.
Note that~$L_n$ and~$g_n$ obtained from step~1 can be identified with the telescoper and certificate
in~\eqref{EQ:teleM}, respectively, because the $\bbR$-submodule generated by~$f_1$, \ldots,~$f_n$
is equal to that generated by~$h$. Consequently,
Algorithm~\textsf{ParaTele} is just an algorithmic formulation
of the proof of Lemma~\ref{LM:suffcond} with an additional assumption
that~$\tilde L$ is a minimal parallel telescoper for~$\tilde f_1, \ldots,~\tilde f_{n-1}$ with respect to~$\vx_{n-1}$.
The conclusions made in the proof of Lemma~\ref{LM:suffcond} remain valid. In particular,~$\tilde L L_n$
is a parallel telescoper for~$f_1$, \ldots,~$f_n$ with respect to~$\vx$.

It remains to prove that~$\tilde L L_n$ is of minimal order.
Assume that~$P \in k(t)\lrD$ is a parallel telescoper
for~$f_1$, \ldots,~$f_n$ with respect to~$\vx$. Then~$P(f_i)=D_i(w)$ for all~$i$ with~$1 \le i \le n$
and for some~$w$ in the submodule generated
by~$f_1$, \ldots,~$f_n$ over~$\bbR$.
In particular,~$P$ is a telescoper for~$f_n$ with respect to~$x_n$.
Thus,~$P=QL_n$ for some~$Q \in k(t) \lrD$.
Applying~$Q$ to~$\tilde f_1, \ldots,~\tilde f_{n-1}$ yields
\begin{equation} \label{EQ:teleN}
Q(\tilde f_i) = Q D_i(g_n) - P(f_i) = D_i(Q(g_n)-w)
\end{equation}
for all~$i$ with~$1 \le i \le n-1$. By~\eqref{EQ:const} in the proof of Lemma~\ref{LM:suffcond},~$D_n(\tilde f_i)=0$
for all~$i$ with~$1 \le i \le n-1$. So Lemma~\ref{LM:var} implies that~$Q$ is a parallel telescoper
for~$\tilde f_1, \ldots,~\tilde f_{n-1}$ with respect to~$\vx_{n-1}$.
Thus,~$Q$ is a left multiple of~$\tilde L$ obtained in step~3,
because~$\tilde L$
is a minimal parallel telescoper for~$\tilde f_1, \ldots,~\tilde f_{n-1}$. So~$P$ is a left multiple of
the product~$\tilde L L_n$. Thus,~$\tilde L L_n$ is of minimal order.
\end{proof}

\begin{example}
Let~$h_1, h_2$ and~$P$ be the same as in Example~\ref{Exam:existence}.
Then~$H_1 := P(h_1)$ and~$H_2 := P(h_2)$ are compatible hyperexponential
functions.  Applying any telescoping algorithm
in~\cite{Almkvist1990, BCCLX2013} to~$H_1$ yields a minimal telescoper
\[L_1:=4t^2D_t^2-8tD_t+5\]
for~$H_1$ satisfying~$L_1(H_1) = D_1(G_1)$ for some hyperexponential function~$G_1$ over~$k(t, x_1, x_2)$.
Set
\[ \tilde{H}_2 := L_1(H_2) - D_{2}(G_1) =  -\frac{16t^2(4tx_2^2+4t+x_2^3+x_2)}{(x_2+t)^4\sqrt{t}}\]
A minimal telescoper for~$\tilde{H}_2$ is~$L_2 := 2tD_t-3$.
Then
\[L=L_2L_1 := 8t^3D_t^3-12t^2D_t^2+18tD_t-15\]
is a minimal parallel telescoper for~$H_1$ and~$H_2$ with respect to~$x_1$ and~$x_2$.
\end{example}

Let us consider how to compute a minimal telescoper for compatible hyperexponential functions~$f_1, \ldots,~f_n$.
For simplicity, we assume that~$f_1$, \ldots, $f_m$ and~$f_{m+1}$, \ldots,~$f_n$ form two distinct equivalence classes
modulo similarity. The same idea applies to the case, in which there are more than two equivalence classes.
Since~$f_i$ and~$f_j$ are not similar for all~$i$ with~$1 \le i \le m$ and~$j$ with~$m+1 \le j \le n$, the compatibility
condition~$D_i(f_j)=D_j(f_i)$ implies that~$D_i(f_j){=}D_j(f_i){=}0$. Let~$P$ be a minimal telescoper for~$f_1$, \ldots,~$f_m$
over~$\vx_m$, and~$Q$ a minimal one for~$f_{m+1}$, \ldots,~$f_n$ with respect to~$x_{m+1}$, \ldots,~$x_n$.
Then~$P(f_i)=D_i(g)$ for all~$i$ with~$1 \le i \le m$ and for some~$g$ in the submodule generated by~$f_1$, \ldots,~$f_m$
over~$\bbR_m$,
and~$Q(f_j)=D_j(h)$ for all~$j$ with~$m{+}1 {\le} j {\le} n$ and for some~$h$ in the submodule generated by~$f_{m+1}$,
\ldots,~$f_n$ over~$k(t, x_{m+1}, \ldots, x_n)\langle D_t, D_{m+1}, \ldots, D_n \rangle$.
In particular, we have~$D_i(h)=D_j(g)=0$ for all~$i$ with~$1 \le i \le m$ and~$j$ with~$m+1 \le j \le n$.

Set~$L$ to be the least common left multiple of~$P$ and~$Q$. Then there exist~$U, V \in k(t)\lrD$ such
that~$L=UP=VQ$. A straightforward calculation implies that~$L$ is a parallel telescoper for~$f_1$, \ldots, $f_n$
with respect to~$\vx$. A certificate of~$L$ is~$U(g)+V(h)$. Let~$L^\prime$ be a parallel telescoper
for~$f_1$, \ldots,~$f_n$ with respect to~$\vx$. By Lemma~\ref{LM:var},~$L^\prime$ is a parallel telescoper
for both~$f_1$, \ldots,~$f_m$ with respect to~$\vx_m$ and~$f_{m+1}$, \ldots,~$f_n$ with respect to~$x_{m+1}$,
\ldots,~$x_n$. So it is a common left multiple of~$P$ and~$Q$. Consequently, it is a left multiple of~$L$.
We conclude that~$L$ is a minimal telescoper for~$f_1$, \ldots,~$f_n$ with respect to~$\vx$.

To construct a parallel telescoper for hyperexponential functions~$f_1, \ldots, f_n$
that are not necessarily compatible with respect to~$\vx$, we
compute a nonzero operator~$P \in k(t)\lrD$ such that~\eqref{EQ:pcc} holds.
Then~$P(f_1)$, \ldots,~$P(f_n)$ are compatible with respect to~$\vx$.
Let~$L$ be a parallel telescoper for~$P(f_1)$, \ldots,~$P(f_n)$.
Then~$LP$ is a parallel telescoper for~$f_1$, \ldots,~$f_n$. But~$LP$
is not necessarily of minimal order.

\section{Parameterized Picard--Vessiot Theory}\label{SECT:ppv}
A generalized differential Galois theory having
differential algebraic groups (as in \cite{Kolchin1985}) as Galois groups was initiated in \cite{Landesman2008}.
The parameterized Picard--Vessiot theory considered in \cite{CassidySinger2007} is a special case of the above generalized differential Galois theory and
studies symmetry groups of the solutions of linear differential equations whose coefficients contain parameters. In this section, we show the connection of
parallel telescoping with this parameterized theory.

Let~$F$, containing~$k(t)$ as a subfield, be a differentially closed field of characteristic zero, i.e.,
any consistent differential system with coefficients in~$F$ has solutions in~$F$.
Let~$F(\vx)$ be the field of rational functions in~$\vx$.  As before,~$\Delta$  stands for the
set~$\{\delta_t, \delta_{1}, \ldots, \delta_{n}\}$ of derivations, and~$E$ is an~$\bbR$-module
as described at the beginning of  Section~\ref{SECT:hyper}.

Let~$E$ be a differential field extension of~$F(\vx)$.
For a subset~$\Lambda\subset \Delta$, an element~$c\in E$ is called a \emph{$\Lambda$-constant} if~$\lambda(c)=0$  for all~$\lambda \in \Lambda$. The set of all $\Lambda$-constants forms a subfield of~$E$, which is denoted by~$C_E^{\Lambda}$.
Consider the differential system
\begin{equation}\label{EQ:psys}
D_{1}(Y) = A_1 Y, \, \,  \ldots, \, \, D_{n}(Y) = A_n Y,
\end{equation}
where~$A_i\in \gl_n(F(\vx))$, the set of~$n\times n$ matrices with entries in~$F(\vx)$, such that
\[D_{i}(A_j) - D_{j}(A_i) = A_iA_j - A_jA_i.\]
As in the classical Galois theory, we now define the \lq\lq splitting field\rq\rq~for the system~\eqref{EQ:psys}.
\begin{define}\label{DEF:ppv}
A \emph{parameterized Picard--Vessiot extension of~$F(\vx)$} (abbreviated as PPV-extension of~$F(\vx)$) for the
system ~\eqref{EQ:psys} is a $\Delta$-field extension~$E$ of~$F(\vx)$ satisfying
\begin{itemize}
\item[(a)] There exists a matrix~$Z\in {\rm GL}_n(E)$ such that~$D_{i}(Z)=A_iZ$ for all~$i=1, \ldots, n$
and~$E$ is generated as a $\Delta$-field over~$F(\vx)$ by the entries of~$Z$.
\item[(b)]  $C_E^{\Lambda} = C_{F(\vx)}^{\Lambda} = F$ for~$\Lambda = \{\delta_1, \ldots, \delta_n\}$.
\end{itemize}
The \emph{parameterized Picard--Vessiot group} (abbreviated as PPV-group) associated with the PPV-extension~$E$ of~$F(\vx)$ is the group
\[ \Gal_{\Delta}(E/F(\vx)) = \text{$\{\sigma\in {\rm Aut}_{F(\vx)}(E) \mid \sigma \delta = \delta \sigma$ for $\delta \in \Delta$\}}.\]
\end{define}
The existence of PPV-extensions for parameterized differential systems has been established
in~\cite[Theorem 9.5~(1)]{CassidySinger2007} under the assumption that~$F$ is differentially closed.
Recently, this existence result has been improved so that one only needs~$F$ to be algebraically closed~\cite{Wibmer2012}
and under weaker closure conditions in~\cite{GGO2013}. In the classical Galois theory,
the Galois group of an algebraic equation is a subgroup
of the permutation group. In the non-parameterized differential case, the Galois group
of a linear differential system is a linear algebraic group, i.e., a group of~$n\times n$ matrices whose entries are elements in the field of constants satisfying certain polynomial equations.
The PPV-group associated with a PPV-extension of~$F(\vx)$ is a \emph{linear differential algebraic group}, i.e., a group of~$n\times n$ matrices whose entries are elements in~$F$
satisfying certain differential equations~\cite[Theorem 9.5~(2)]{CassidySinger2007}.

\begin{example}[Example 3.1 in~\cite{CassidySinger2007}] \label{EXAM:ppv1}
Consider the equation
\[D_x(Y) = \frac{t}{x}\, Y.\]
The PPV-extension for this equation is the $\{\delta_t, \delta_x\}$-field, generated by the element~$z=x^t$, i.e.,
\[E \triangleq F(x)(z, \delta_x(z), \delta_t(z),  \ldots  )= F(x, x^t, \log(x)).\]
The corresponding PPV-group is as follows:
\begin{align*}
  \Gal_{\Delta}(E/F(x)) & = \{ a\in F \mid a\neq 0\,\, \text{and}\,\, \delta_t\left(\frac{\delta_t(a)}{a}\right)=0\}.
\end{align*}
\end{example}

As a corollary of the general Galois correspondence~\cite[Theorem 9.5]{CassidySinger2007}, the following
lemma will be used frequently in the rest of this section.

\begin{lem}\label{LM:galois}
Let~$E$ be a PPV-extension of~$F(\vx)$ for some parameterized differential system
and let~$\Gal_{\Delta}(E/F(\vx))$ be the associated PPV-group. Then the set
\[\{f\in E \mid \sigma(f)=f\, \text{for all~$\sigma\in \Gal_{\Delta}(E/F(\vx))$}\} \]
coincides with the field~$F(\vx)$.
\end{lem}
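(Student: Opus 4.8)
The plan is to recognize this statement as the instance of the parameterized Galois correspondence evaluated at the base field, and to deduce it directly from the fundamental theorem \cite[Theorem 9.5]{CassidySinger2007}. Write $G = \Gal_{\Delta}(E/F(\vx))$ and let $E^{G}$ denote the fixed field $\{f\in E \mid \sigma(f)=f \text{ for all } \sigma\in G\}$, whose equality with $F(\vx)$ is to be established.

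First I would dispose of the easy inclusion $F(\vx)\subseteq E^{G}$. By the very definition of the PPV-group in Definition~\ref{DEF:ppv}, every $\sigma\in G$ lies in $\mathrm{Aut}_{F(\vx)}(E)$ and therefore fixes each element of $F(\vx)$ pointwise; hence $F(\vx)$ is contained in the fixed field. This direction uses nothing beyond the definition.

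The substance lies in the reverse inclusion $E^{G}\subseteq F(\vx)$, which I would obtain from the bijectivity assertion of the Galois correspondence. The fundamental theorem supplies inclusion-reversing, mutually inverse maps $L\mapsto \Gal_{\Delta}(E/L)$ and $H\mapsto E^{H}$ between the intermediate $\Delta$-fields $L$ with $F(\vx)\subseteq L\subseteq E$ and the Kolchin-closed subgroups $H$ of $G$. Mutual invertibility means in particular that $E^{\Gal_{\Delta}(E/L)}=L$ for every such $L$; applying this round-trip identity to $L=F(\vx)$ and using the tautology $\Gal_{\Delta}(E/F(\vx))=G$ yields $E^{G}=E^{\Gal_{\Delta}(E/F(\vx))}=F(\vx)$, which is exactly the asserted equality.

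The only point that needs care, and the step I expect to carry the genuine content, is verifying that the correspondence is applicable: its proof in \cite{CassidySinger2007} rests on the hypothesis that $E$ acquires no new $\Lambda$-constants over $F(\vx)$ for $\Lambda=\{\delta_1,\ldots,\delta_n\}$, namely $C_E^{\Lambda}=C_{F(\vx)}^{\Lambda}=F$. This is precisely condition~(b) of Definition~\ref{DEF:ppv}, so it holds by assumption, and the fundamental theorem may be invoked verbatim. Consequently no separate computation is required beyond checking this hypothesis.
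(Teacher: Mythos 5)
Your proposal is correct and matches the paper's approach: the paper likewise presents this lemma as an immediate corollary of the parameterized Galois correspondence of \cite[Theorem 9.5]{CassidySinger2007}, obtained by applying the fixed-field identity to the base field $F(\vx)$ itself. Your additional check that condition (b) of Definition~\ref{DEF:ppv} supplies the no-new-constants hypothesis is exactly the point that makes the citation legitimate.
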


\subsection{Galois groups of first-order systems}

Unlike the usual Picard--Vessiot theory where we have a complete algorithm to compute the Galois group of a given linear differential equation over the field of rational functions~\cite{Hrushovski2002}, we have only partial algorithmic results for the PPV-theory.
Algorithms for first and second order parameterized equations over $F(\vx),$ where~$n=1$, appear
in~\cite{Arreche2012,Dreyfus2011}.
An algorithm to determine if a parameterized equation of arbitrary order has
a unipotent PPV-group (or even certain kinds of extensions of such a group) as well
as an algorithm to compute the group appears in~\cite{MOS2013b}. An algorithm to determine
if a parameterized equation has a reductive PPV-group and compute it if it does appears in~\cite{MOS2013a}.

We now show how one determines the PPV-group of a first-order differential system
of the form
\begin{equation}\label{EQ:fosys}
D_{1}(Y) = f_1, \, \,  \ldots, \, \, D_{n}(Y) = f_n,
\end{equation}
where~$f_1, \ldots, f_n\in F(\vx)$ are compatible rational functions with respect to~$\vx$.
Let~$E$ be the PPV-extension  of~$F(\vx)$ and let~$z\in E$ be a solution of the system~\eqref{EQ:fosys}.
For every~$\sigma \in \Gal_{\Delta}(E/F(\vx))$, $\sigma(z)$ is still a solution of the system~\eqref{EQ:fosys}.
Then $\sigma(z) {=} z {+} c_{\sigma}$ for some~$c_{\sigma} \in C_E^{\Lambda}=F$
with~$\Lambda = \{\delta_{1}, \ldots, \delta_{n}\}$.  By fixing a solution~$z$, we get a representation of
the PPV-group~$\Gal_{\Delta}(E/F(\vx))$ as a subgroup of the additive group~$(F, +)$.
The subgroups of~$(F, +)$ have been classified by Cassidy~\cite[Lemma 11]{Cassidy1972} and Sit~\cite[Theorem 1.3, p.647]{Sit1975}. That is, any subgroup~$G$
of~$(F, +)$ is of the form~$\{a\in F \mid L(t, D_t)(a)=0\}$,
where~$L$ is a linear differential operator in~$F\lrD$.
We call~$L$ the~\emph{defining operator} for~$G$.

\begin{lem}\label{LM:ppvcoeff}
If the coefficients~$f_1, \ldots, f_n$ of the system~\eqref{EQ:fosys} are in~$k(t)(\vx)$, then the defining operator~$L$
for its corresponding PPV-group is in~$k(t)\lrD$. \end{lem}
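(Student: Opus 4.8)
The plan is to identify the defining operator $L$ with a minimal parallel telescoper for $f_1,\ldots,f_n$ over the large field $F$, and then to argue that such a minimal telescoper is insensitive to the base field, so that it descends from $F$ to $k(t)$. First I would pin down the Galois-theoretic meaning of $L$. Fix a solution $z\in E$ with $D_i(z)=f_i$, so each $\sigma\in\Gal_\Delta(E/F(\vx))$ acts by $\sigma(z)=z+c_\sigma$ and $G=\{c_\sigma\}$ is the subgroup of $(F,+)$ whose defining operator is $L\in F\lrD$, i.e. $G=\{a\in F: L(a)=0\}$ of minimal order. For any $M\in F\lrD$ one has $\sigma(M(z))=M(z)+M(c_\sigma)$, because $M$ has $\Lambda$-constant (hence $\sigma$-commuting) coefficients and $c_\sigma\in F=C_E^\Lambda$; by Lemma~\ref{LM:galois}, $M(z)\in F(\vx)$ exactly when $M(c_\sigma)=0$ for all $\sigma$, that is, when $M$ annihilates $G$. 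Since $F$ is differentially closed, $G$ is the full solution space of $L$, and $M$ annihilates $G$ iff $M\in F\lrD\,L$. Writing $g=M(z)$ and using $\delta_i(g)=M(f_i)$, this says precisely that the operators $M\in F\lrD$ with $M(z)\in F(\vx)$ are the parallel telescopers for $f_1,\ldots,f_n$ over $F$ (the certificate $g$ lies in the $\bbR$-submodule generated by the $f_i$, which is all of $F(\vx)$ once some $f_i\neq0$). Thus $L$ is, up to normalization, the minimal parallel telescoper for $f_1,\ldots,f_n$ over $F$.

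Next I would produce a telescoper over the small field. The $f_i$ are rational, hence $D$-finite, and compatible by hypothesis, so Lemma~\ref{LM:suffcond} yields a minimal parallel telescoper $\hat L\in k(t)\lrD$ with certificate $\hat g\in k(t)(\vx)$. Since $k(t)(\vx)\subseteq F(\vx)$, $\hat L$ is also a parallel telescoper over $F$, whence $\hat L\in F\lrD\,L$ and $\ord(L)\le\ord(\hat L)$. It remains to prove the reverse inequality, which, both operators being monic in $D_t$, will force $L=\hat L\in k(t)\lrD$.

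The heart of the matter is a \emph{descent} statement: the minimal parallel telescoper does not change when $k(t)$ is enlarged to $F$. I would prove this by showing that, for each bound $d$, the space of parallel telescopers of order $\le d$ is cut out inside $F\lrD_{\le d}$ by linear equations with coefficients in $k(t)$. The only nonformal ingredient is the claim that for a compatible tuple $u_1,\ldots,u_n\in k(t)(\vx)$, exactness over $F(\vx)$ (existence of $g\in F(\vx)$ with $\delta_i(g)=u_i$) implies exactness over $k(t)(\vx)$; equivalently, the residue obstructions to rational integrability are computed over $k(t)$ and are unchanged by field extension. Granting this, applying it to $u_i=M(f_i)$ turns the condition ``$M$ is a parallel telescoper'' into $k(t)$-linear conditions on the coefficients of $M$, so the solution space is defined over $k(t)$ and its dimension is stable under the extension $k(t)\subseteq F$. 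Consequently the minimal orders over $F$ and over $k(t)$ agree, and $L=\hat L$ lies in $k(t)\lrD$.

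The step I expect to be the main obstacle is exactly this descent of rational integrability. I would establish it either through residues---showing that a closed rational $1$-form $\sum u_i\,dx_i$ with $u_i\in k(t)(\vx)$ is exact over any extension field iff all of its hypersurface residues vanish, an intrinsic condition independent of the coefficient field---or by a specialization/averaging argument: the coefficients of an $F$-potential $g$ generate a finitely generated extension of $k(t)$, which one specializes into $\overline{k(t)}$ (avoiding the finitely many loci where denominators degenerate) and then averages over $\mathrm{Gal}(\overline{k(t)}/k(t))$ to obtain a potential in $k(t)(\vx)$, using that the $u_i$ are fixed. Care with denominators under specialization and with the finiteness of the relevant Galois orbit is where the real work lies.
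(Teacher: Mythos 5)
Your proof is correct, but it takes a genuinely different route from the paper's. The paper disposes of this lemma by citation: since $k(t)$ is algebraically closed in $k(t,\vx)$ and $\Delta$ consists of independent derivations, Theorem~2.8, Remark~2.9.2 and Lemma~8.2 of \cite{GGO2013} give a PPV-extension already defined over $k(t,\vx)$ whose PPV-group is defined over $k(t)$, and Sit's classification then places the defining operator in $k(t)\lrD$. You instead prove the descent by hand: you first identify the defining operator with the monic minimal parallel telescoper over $F$ --- in effect the $F$-analogue of Theorem~\ref{THM:direct}, derived without circularity since you never invoke that theorem or this lemma --- and then argue that the minimal order of a parallel telescoper does not drop when $k(t)$ is enlarged to $F$. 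The only place where details are owed is exactly the step you flag: that a compatible tuple in $k(t)(\vx)^n$ which is exact over $F(\vx)$ is already exact over $k(t)(\vx)$, so that the telescopers of bounded order form a space defined over $k(t)$. This is indeed provable as you sketch (bound the degrees and the denominator of a potential certificate by those of the given tuple, so exactness becomes a finite linear system with coefficients in $k(t)$, whose solvability is insensitive to the extension $k(t)\subseteq F$); note that the paper itself uses the weaker partial-fraction form of this descent inside the proof of Theorem~\ref{THM:direct}. What your approach buys is a self-contained, elementary argument that also yields Theorem~\ref{THM:direct} as a byproduct; what it costs is carrying out explicitly the rational-integrability descent that the paper absorbs into the cited Lemma~8.2 of \cite{GGO2013}.
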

\begin{proof}
As noted above PPV-group~$G$  can be identified with the set of solutions of
an equation of the form~$L(y) {=} 0$ where~$L\in F\langle D_t\rangle$.
To see that this group is actually defined over~$k(t)$, note that~$k(t, \vx)$ is a
purely transcendental extension of~$k(t)$ and so~$k(t)$ is algebraically closed in~$k(t, \vx)$.
Furthermore,~$\Delta$ consists of independent derivations over~$k(t, \vx)$. Remark~2.9.2 and Theorem 2.8 of \cite{GGO2013} imply that a parameterized
Picard-Vessiot extension exists for our equations and Lemma~8.2 of \cite{GGO2013} implies that the parameterized Picard-Vessiot group is defined over $k(t)$. Finally~\cite[Theorem 1.3, p.647]{Sit1975} implies that this group is defined as claimed above.
\end{proof}

Now we present the main result of this section that the problem of determining the PPV-group of
the system~\eqref{EQ:fosys} with coefficients in~$k(t, \vx)$ is equivalent to that of computing a minimal parallel telescoper for its coefficients.

\begin{thm}\label{THM:direct}
Let~$f_1, \ldots, f_n$ be the coefficients of the system~\eqref{EQ:fosys} such that they are in~$k(t, \vx)$
and compatible with respect to~$\vx$. Then~$L\in k(t)\lrD$ is the defining operator for the PPV-group of the system~\eqref{EQ:fosys}
if and only if $L$ is a minimal parallel telescoper for $f_1, \ldots, f_n$ with respect to~$\vx$.
\end{thm}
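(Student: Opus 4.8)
The plan is to identify the two operators by showing that, inside the left Euclidean domain $k(t)\lrD$, the parallel telescopers for $f_1,\dots,f_n$ and the operators annihilating the Galois group cut out one and the same left ideal, and then to match their monic generators. Fix a solution $z\in E$ of~\eqref{EQ:fosys}, so $D_i(z)=f_i$, and recall that $\sigma(z)=z+c_\sigma$ with $c_\sigma\in F$ for every $\sigma\in\Gal_\Delta(E/F(\vx))$; the assignment $\sigma\mapsto c_\sigma$ realizes the PPV-group as a subgroup $G\subseteq(F,+)$ whose defining operator is some $L_0$, and $L_0\in k(t)\lrD$ by Lemma~\ref{LM:ppvcoeff}. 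The one computation underlying everything is that any $L\in k(t)\lrD$ commutes with each $D_i$ (its coefficients lie in $k(t)$, which $\delta_i$ kills), whence
\[ L(f_i)=L(D_i(z))=D_i(L(z)) \qquad (1\le i\le n). \]
Moreover $\sigma(L(z))=L(\sigma(z))=L(z)+L(c_\sigma)$ for $\sigma\in G$, so $L(z)$ is fixed by the whole Galois group exactly when $L(c_\sigma)=0$ for all $\sigma$; by Lemma~\ref{LM:galois} this reads: $L(z)\in F(\vx)$ if and only if $L$ annihilates $G$.

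Next I would prove the key equivalence, for nonzero $L\in k(t)\lrD$: \emph{$L$ is a parallel telescoper for $f_1,\dots,f_n$ if and only if $L(a)=0$ for all $a\in G$.} For the forward direction, if $L(f_i)=D_i(g)$ with $g$ a certificate, then (since the $\bbR$-submodule generated by $f_1,\dots,f_n$ is all of $k(t,\vx)$ once some $f_i\neq 0$) we have $g\in k(t,\vx)\subseteq F(\vx)$, and $w:=L(z)-g$ satisfies $D_i(w)=L(f_i)-D_i(g)=0$ for all $i$; hence $w\in C_E^\Lambda=F$ and $L(z)=g+w\in F(\vx)$, so by the paragraph above $L$ annihilates $G$. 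For the converse, if $L$ annihilates $G$ then $L(z)\in F(\vx)$ and $D_i(L(z))=L(f_i)\in k(t,\vx)$, so the only remaining task is to \emph{descend} this primitive to $k(t,\vx)$.

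This descent is the crux and I expect it to be the main obstacle. Writing $b_i:=L(f_i)\in k(t,\vx)$, compatibility of $f_1,\dots,f_n$ and the commutation of $L$ with the $\delta_i$ give $\delta_i(b_j)=\delta_j(b_i)$, and we know the closed form $\sum_i b_i\,dx_i$ has the rational primitive $L(z)$ over $F(\vx)$; the claim to establish is that it then also has a primitive $\tilde g\in k(t,\vx)$. Granting this, $L(f_i)=D_i(\tilde g)$ with $\tilde g$ in the submodule generated by the $f_i$, so $L$ is a parallel telescoper. I would prove the descent by noting that solvability of $\delta_i(Y)=b_i$ in rational functions is insensitive to enlarging the constant field $k(t)$ to $F$: the first de Rham cohomology of the rational function field commutes with constant-field extension and $F$ is faithfully flat over $k(t)$, so a class exact over $F(\vx)$ is already exact over $k(t,\vx)$; alternatively one integrates variable by variable, the obstruction to a rational $x_i$-primitive being the vanishing of residues, which live in $\overline{k(t)(\dots)}$ and are unaffected by $F$. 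The trivial case $f_1=\cdots=f_n=0$ (where $G=\{0\}$ and $L=1$) is checked directly.

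Finally I assemble the pieces through the two left ideals. By the remark following Definition~\ref{DEF:paratele}, the parallel telescopers together with $0$ form the left ideal generated by the minimal parallel telescoper; by the key equivalence this ideal equals $\{L\in k(t)\lrD : L(a)=0\ \text{for all }a\in G\}$. Since $F$ is differentially closed, the solution space of $L_0$ in $F$ has dimension $\ord(L_0)$ over the $\delta_t$-constants of $F$, so right division by $L_0$ together with a dimension count shows that $L$ annihilates $G=\mathrm{Sol}(L_0)$ if and only if $L_0$ right-divides $L$; hence this same ideal is generated by $L_0$. Two monic generators of one left ideal in the Euclidean domain $k(t)\lrD$ coincide, so $L$ is the defining operator of the PPV-group precisely when it is the monic minimal parallel telescoper for $f_1,\dots,f_n$, which is the assertion.
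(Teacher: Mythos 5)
Your proof is correct and follows essentially the same route as the paper: fix a solution $z$, use Lemma~\ref{LM:galois} to show that $L(z)\in F(\vx)$ exactly when $L$ annihilates the group, descend the certificate from $F(\vx)$ to $k(t,\vx)$ (the paper does this by partial fractions in $x_n$ and induction, which is precisely your residue-by-residue alternative; the flat-base-change phrasing is the shakier of your two options, since $F(\vx)$ is strictly larger than $F\otimes_{k(t)}k(t,\vx)$), and conclude by comparing generators of left ideals in $k(t)\lrD$. The only repackaging is that you prove one clean equivalence between parallel telescopers and annihilators of $G$ and then match monic generators, whereas the paper establishes the two mutual divisibilities directly.
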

\begin{proof}
Let~$\tilde{L}$ be the defining operator for the PPV-group~$G$ of the system~\eqref{EQ:fosys}.
By Lemma~\ref{LM:ppvcoeff}, $\tilde{L}$ is in~$k(t)\lrD$.
We claim that~$\tilde L$ is a parallel telescoper for
$f_1, \ldots, f_n$ with respect to~$\vx$.
Let~$z\in E$ be a solution of~\eqref{EQ:fosys}.
Then, for any~$\sigma\in G$, $\sigma(z) = z + c_{\sigma}$, where~$c_{\sigma}\in F$ is such that~$\tilde L(c_{\sigma}) =0$.  For any~$\sigma\in G$, $\sigma(\tilde L(z)) = \tilde L(\sigma(z)) = \tilde L(z+c_{\sigma}) = \tilde L(z) + \tilde L(c_{\sigma}) = \tilde L(z)$. Then~$\tilde g:=\tilde L(z)\in F(\vx)$ by Lemma~\ref{LM:galois}. Since~$\tilde L$
commutes with~$D_{i}$ for all~$i=1, \ldots, n$,
\begin{equation}\label{EQ:ptele}
\tilde L(D_{i}(z)) = \tilde L(f_i) = D_{i}(\tilde g).
\end{equation}
We now show that we can choose~$\tilde{g}\in k(t,\vx)$~(and so~$\tilde{L}$
will be a parallel telescoper with~$\tilde{g}\in k(t,\vx)$).
Since the~$f_i$~are in~$k(t,\vx)$, equations~(\ref{EQ:ptele}) imply that~$D_i(\tilde{g})$ belongs to~$k(t,\vx)$.
Expanding $\tilde g$ in partial fractions
with respect to~$x_n$ and using induction, one sees that there is an element $c\in F$ such that~$\tilde g - c \in k(t, \vx)$.
Now let~$L\in k(t)\lrD$ be a minimal parallel telescoper for $f_1, \ldots, f_n$ with respect to~$\vx$.
Then, we have that~$L$ divides~$\tilde L$. To complete the proof,  it remains to show that~$\tilde L$ divides~$L$.
It suffices to prove that $L(c_{\sigma}) = 0$ for any~$\sigma\in G$.
Since~$L(f_i)= D_{i}(g)$ for some~$g\in k(t, \vx) \subset F(\vx)$,
$$ D_{i}(L(z)-g) =L(D_{i}(z)) - D_{i}(g)= L(f_i) - D_{i}(g) = 0.$$
Therefore~$L(z)-g \in F$ and so~$L(z)\in F(\vx)$.  For any~$\sigma\in G$, $L(z) = \sigma(L(z)) = L(\sigma(z)) = L(z + c_{\sigma})=L(z) + L(c_{\sigma})$. Then~$L(c_{\sigma}) = 0$.
\end{proof}

\begin{example} \label{EXAM:ppv2}
Consider the differential system
\begin{equation}\label{EQ:ppv2}
  D_{1}(Y) = f_1, \quad D_2(Y)=f_2,
\end{equation}
where~$f_1, f_2\in k(t, x_1, x_2)$ are of the form
\[f_1 = \frac{t}{x_1+x_2+t}, \quad f_2 = \frac{tx_2+t^2+x_1+x_2+t}{(x_1+x_2+t)(x_2+t)}.\]
It is easy to check that~$D_{2}(f_1)=D_{1}(f_2)$. Applying any telescoping algorithm
in~\cite{Almkvist1990, BCCL2010} to~$f_1$ yields a minimal telescoper~$L_1$ for~$f_1$
with certificate~$g_1$ as follows
\[L_1 = tD_t - {1}\quad \text{and}\quad g_1 =\frac{t^2}{x_1+x_2+t} .\]
Set~$\tilde{f}_2 := L_1(f_2) - D_{2}(g_1) =  -  (2t+x_2)/(x_2+t)^2$.
A minimal telescoper for~$\tilde{f}_2$ and its certificate are
\[L_2 = D_t \quad \text{and}\quad g_2 = \frac{-t}{(x_2+t)^2}.\]
Then~$L=L_2L_1=tD_t^2$ is a minimal parallel telescoper for~$f_1$ and~$f_2$.
Then the corresponding PPV-group of system~\eqref{EQ:ppv2} is as follows:
\[ \Gal_{\Delta}(E/F({\vx})) = \left\{a\in F  \mid  \delta_t^2(a)=0 \right\}.
\]
\end{example}

\subsection{An inverse problem}\label{subsec:inverse}
As in the classical Galois theory, a natural question that arises is the inverse problem: {\it Which groups occur as Galois groups over a given field?}
In \cite[Example 7.1]{CassidySinger2007}, the authors consider a~$\Delta$-field~$F(x)$,  where~$\Delta = \{\delta_t, \delta_x\}$ and~$F$ is a $\{\delta_t\}$-differentially closed field.
They show that the additive group~$(F, +)$  cannot be the Galois group of a parameterized Picard--Vessiot extension of this field.  In rest of this section, we show
a similar result for fields of rational functions in several variables (for other results concerning the inverse problem, see~\cite{Dreyfus2012, MOS2013a, MOS2013b,  MitschiSinger2012, Singer2013}).
The key tool will be the fact that parallel telescopers always exist for compatible rational functions.
Let us first recall a lemma, which is an easy corollary of Theorem 4 of Chapter VII.3 in~\cite{Kolchin1985}.
\begin{lem}\label{LM:kolchin}
If~$G$ is the PPV-group of a PPV-extension $E$ of~$F(\vx)$,
then~$E=F(\vx)\langle z\rangle_{\Delta}$, satisfying for any~$\sigma \in G$
\[\text{$\sigma(z) = z+ c_{\sigma}$ with~$c_{\sigma}\in F$.}\]
\end{lem}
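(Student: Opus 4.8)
The plan is to recognize $E/F(\vx)$ as a strongly normal extension to which Kolchin's structure theorem applies, and then to read off the single-generator-with-translation conclusion by specializing that theorem to a Galois group sitting inside the additive group $\Ga$. Two points must be kept straight from the outset. First, the hypothesis that makes the stated conclusion possible (and the case relevant to the ensuing inverse problem for $(F,+)$, as well as to the first-order systems of the previous subsection) is that the action of $G$ is by translations, i.e.\ $G$ is realized as a differential algebraic subgroup of $\Ga(F)=(F,+)$. Second, the field of constants relevant to Kolchin's theorem is the field $C_E^{\Lambda}$ of $\Lambda$-constants for $\Lambda=\{\delta_1,\ldots,\delta_n\}$, which by part~(b) of Definition~\ref{DEF:ppv} equals $F$; the parametric derivation $\delta_t$ is not used to cut out constants but survives as the differential structure on $F$ over which $G$ is defined as a differential algebraic group.

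First I would view $E$ as a $\{\delta_1,\ldots,\delta_n\}$-field whose constants are $C_E^{\Lambda}=F$ and record that $E/F(\vx)$ is then strongly normal with Galois group equal to $\Gal_{\Delta}(E/F(\vx))=G$; this identification of a Cassidy--Singer PPV-extension with a strongly normal extension in the differential algebraic category is exactly what licenses applying Theorem~4 of Chapter~VII.3 of \cite{Kolchin1985}. Specializing that structure theorem to the case $G\hookrightarrow\Ga$ yields an element $z\in E$ with $E=F(\vx)\langle z\rangle_{\Delta}$ on which each $\sigma\in G$ acts by $\sigma(z)=z+c_{\sigma}$, the translation amount $c_{\sigma}$ being a constant, i.e.\ $c_{\sigma}\in C_E^{\Lambda}=F$. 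As a consistency check one sees that $z$ is forced to be an \lq\lq integral\rq\rq: from $\sigma(\delta_i z)=\delta_i(z+c_{\sigma})=\delta_i z$ (using $\delta_i c_{\sigma}=0$) we get $\delta_i z\in E^G=F(\vx)$ by Lemma~\ref{LM:galois}, so $z$ solves a compatible first-order system over $F(\vx)$, exactly as in~\eqref{EQ:fosys}.

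It remains only to note that $\sigma\mapsto c_{\sigma}$ is a homomorphism into $(F,+)$: since $c_{\tau}\in F\subseteq F(\vx)$ and $\sigma$ fixes $F(\vx)$, we have $(\sigma\tau)(z)=\sigma(z+c_{\tau})=z+c_{\sigma}+c_{\tau}$, whence $c_{\sigma\tau}=c_{\sigma}+c_{\tau}$; the map is injective because $c_{\sigma}=0$ forces $\sigma$ to fix both $F(\vx)$ and the generator $z$, hence all of $E$. I expect the only real work --- what makes this a genuine, if easy, corollary rather than a restatement --- to be the bookkeeping that justifies the invocation of Kolchin's theorem: verifying that Definition~\ref{DEF:ppv} does produce a strongly normal extension of the required kind, and, more delicately, matching Kolchin's notion of constant with $C_E^{\Lambda}=F$ rather than with the full field of $\Delta$-constants. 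Once the constants are correctly pinned to $F$, both the single generator and the additive action drop out of the cited theorem.
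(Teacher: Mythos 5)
Your proposal follows exactly the route the paper takes: the paper offers no argument beyond observing that the lemma is an easy corollary of Theorem~4 of Chapter~VII.3 of \cite{Kolchin1985}, and your write-up simply unpacks that citation --- identifying the PPV-extension as a (generalized) strongly normal extension with $\Lambda$-constants $C_E^{\Lambda}=F$ and specializing Kolchin's structure theorem to a Galois group inside $\Ga(F)$. Your explicit flagging of the tacit hypothesis that $G$ acts by translations (i.e.\ $G\leq(F,+)$, which is the only case the paper uses, in Theorem~\ref{THM:inverse}) and of the need to match Kolchin's constants with $C_E^{\Lambda}$ rather than the full $\Delta$-constants is correct and is the only genuine content the verification requires.
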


\begin{thm}\label{THM:inverse}
The additive group~$G=(F, +)$ is not the PPV-group of a PPV-extension of~$F(\vx)$.
\end{thm}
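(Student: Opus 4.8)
The plan is to argue by contradiction, exploiting the fact that if $(F,+)$ were the Galois group then a suitable parallel telescoper would be forced to annihilate \emph{every} element of $F$, which no nonzero ordinary differential operator can do. So suppose that $G=(F,+)$ is the PPV-group of a PPV-extension $E$ of $F(\vx)$. By Lemma~\ref{LM:kolchin} we may write $E=F(\vx)\langle z\rangle_{\Delta}$ with $\sigma(z)=z+c_\sigma$ and $c_\sigma\in F$ for every $\sigma\in G$; moreover, under the identification of $G$ with $(F,+)$ given by $\sigma\mapsto c_\sigma$, the value $c_\sigma$ runs through all of $F$ as $\sigma$ runs through $G$. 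The idea is to apply parallel telescoping to the coefficients $D_i(z)$ and to track the effect of $L$ on $z$.

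First I would check that $f_i:=D_i(z)$ lies in $F(\vx)$ and that $f_1,\dots,f_n$ are compatible with respect to~$\vx$. Since $c_\sigma\in F=C_E^{\Lambda}$ with $\Lambda=\{\delta_1,\dots,\delta_n\}$, each $\delta_i$ kills $c_\sigma$, so $\sigma(D_i(z))=D_i(z+c_\sigma)=D_i(z)$ for all $\sigma\in G$; hence $D_i(z)$ is fixed by $G$ and therefore lies in $F(\vx)$ by Lemma~\ref{LM:galois}, while compatibility $D_i(f_j)=D_j(f_i)$ is immediate from $D_iD_j(z)=D_jD_i(z)$. As $f_1,\dots,f_n$ are compatible rational functions, the existence result for parallel telescopers (Lemma~\ref{LM:suffcond}, with $F$ in the role played by $k(t)$) produces a nonzero operator $L\in F\lrD$ and an element $g\in F(\vx)$ with $L(f_i)=D_i(g)$ for all $i$. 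Because the coefficients of $L$ lie in $F$, on which every $\delta_i$ vanishes, $L$ commutes with each $D_i$; consequently $D_i(L(z))=L(D_i(z))=L(f_i)=D_i(g)$, so $D_i\bigl(L(z)-g\bigr)=0$ for all $i$, which gives $L(z)-g\in C_E^{\Lambda}=F$ and hence $L(z)\in F(\vx)$.

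It then remains to extract the contradiction. Since $L(z)\in F(\vx)$, it is fixed by every $\sigma\in G$, so
\[
L(z)=\sigma\bigl(L(z)\bigr)=L\bigl(\sigma(z)\bigr)=L(z+c_\sigma)=L(z)+L(c_\sigma),
\]
forcing $L(c_\sigma)=0$ for all $\sigma\in G$. As $c_\sigma$ exhausts $F$, the nonzero operator $L=\sum_{i=0}^{d}a_iD_t^i$ with $d=\ord(L)$ annihilates every element of $F$. Applying $L$ in turn to $1,t,t^2,\dots,t^d\in k[t]\subseteq F$ and using $\delta_t^i(t^j)=\tfrac{j!}{(j-i)!}t^{j-i}$ (which vanishes for $i>j$) yields successively $a_0=a_1=\cdots=a_d=0$, so $L=0$, contradicting $L\neq 0$. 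This finishes the argument and, incidentally, recovers the single-variable statement of \cite[Example 7.1]{CassidySinger2007} as the case $n=1$.

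The main obstacle is the transfer step in the second paragraph: one must be certain that the parallel-telescoping machinery of Section~\ref{SECT:paratele}, developed over $k(t,\vx)$ with the distinguished coefficient field $k(t)$, applies without change to the compatible rational functions $f_i\in F(\vx)$ over the coefficient field $F$, and in particular that the telescoper $L$ can be taken with coefficients in $F=C_{F(\vx)}^{\{\delta_1,\dots,\delta_n\}}$ (so that it commutes with each $D_i$). This is what legitimises both the passage $L(z)\in F(\vx)$ and the final commutation $\sigma(L(z))=L(\sigma(z))$; the proofs of Lemmas~\ref{LM:suffcond} are formal in the base field and transfer since $F(\vx)/F$ is purely transcendental with $\delta_i=\partial/\partial x_i$, exactly mirroring $k(t,\vx)/k(t)$. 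Once this point is secured, the remaining Galois-invariance computation and the elementary fact that a nonzero linear ODE over a $\delta_t$-field cannot annihilate every power of $t$ complete the proof without further difficulty.
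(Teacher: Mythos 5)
Your argument is correct and follows essentially the same route as the paper: Lemma~\ref{LM:kolchin} gives $f_i=D_i(z)\in F(\vx)$ and compatibility, the existence of a parallel telescoper $L\in F\lrD$ over the base $F$ forces $L(c_\sigma)=0$ for all $\sigma\in G$ exactly as in the proof of Theorem~\ref{THM:direct}, and the solution set of a nonzero $L$ cannot be all of $F$. Your only additions are welcome elaborations of points the paper leaves implicit (why $f_i$ is $G$-invariant hence rational, and the explicit check on $1,t,\dots,t^d$ showing $\{c\in F\mid L(c)=0\}\varsubsetneq F$).
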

\begin{proof}
We argue by contradiction.
Assume that~$G$ is the PPV-group of some PPV-extension~$E$ of~$F(\vx)$.
Then Lemma~\ref{LM:kolchin} implies that~$D_{i}(z) = f_i$ with~$f_i\in F(\vx)$ for all~$i=1,$ \ldots,~$n$.
Since~$D_{i}$ and~$D_{j}$ commute in~$F(\vx)$, $f_1, \ldots, f_n$ are compatible.
By Theorem~\ref{THM:existence}, there exists~$L$ in~$F\lrD$ such that
$L(f_i) = D_{i}(g)$ for some~$g\in F(\vx)$.
By the same argument in the proof of Theorem~\ref{THM:direct}, $L(c_{\sigma}) = 0$ for all~$\sigma\in G$.
This implies that~$G \subset \{c\in F \mid L(c)=0\} \varsubsetneq F$, which is a contradiction with~$G=F$.
\end{proof}

\def\polhk#1{\setbox0=\hbox{#1}{\ooalign{\hidewidth
  \lower1.5ex\hbox{`}\hidewidth\crcr\unhbox0}}}

%
%

\end{document}